\documentclass[lettersize,journal]{IEEEtran}
\usepackage{cite}
\usepackage{amsmath,amssymb,amsfonts}
\usepackage{graphicx}
\usepackage{textcomp}
\usepackage{amsthm}

\usepackage{color}
\usepackage{graphics,epsfig,subfigure}
\usepackage{bm,mathtools}
\usepackage{multirow,multicol}
\usepackage{algorithm,algpseudocode}
\newtheorem{theorem}{Theorem~}
\newtheorem{assumption}{Assumption~}
\newtheorem{remark}{Remark~}
\newtheorem{lemma}{Lemma~}
\newtheorem{problem}{Problem~}
\begin{document}
	\title{Coalitional Zero-Sum Games for ${H_{\infty}}$ Leader-Following Consensus Control}
	
	\author{Yunxiao~Ren, Dingguo Liang$^\star$,  Yuezu~Lv,~\IEEEmembership{Senior Member,~IEEE,}  
	 Zhisheng~Duan,~\IEEEmembership{Senior Member,~IEEE} 
		\thanks{Y.~Ren is with the College of Engineering, Peking University, Beijing 100871, China (email: renyx@pku.edu.cn).}
		\thanks{D. Liang is with Institute for Automatic Control and Complex Systems, University of Duisburg-Essen, Duisburg 47057, Germany (corresponding author; email: dingguo.liang@uni-due.de).}
		\thanks{Y.~Lv is with  the Beijing Key Laboratory of
			Lightweight Intelligent System, Beijing Institute of Technology, Beijing, 100081, China(email:yzlv@bit.edu.cn).}
		\thanks{Z.~Duan is with the School of advanced manufacturing and robotics , Peking University, Beijing 100871, China (email:duanzs@pku.edu.cn).}
		\thanks{This work has been submitted to the IEEE for possible publication. Copyright may be transferred without notice, after which this version may no longer be accessible.}
	}

	\maketitle
	

		\begin{abstract}
	This paper investigates the leader-following consensus problem for a class of multi-agent systems subject to adversarial attack-like external inputs. To address this, we formulate the robust leader-following control problem as a global coalitional min-max zero-sum game using differential game theory. Specifically, the agents' control inputs form a coalition to minimize a global cost function, while the attacks form an opposing coalition to maximize it. Notably, when these external adversarial attacks manifest as disturbances, the designed game-theoretic control policy systematically yields a robust $H_\infty$ control law. Addressing this problem inherently requires solving a high-dimensional generalized algebraic Riccati equation (GARE), which poses significant challenges for distributed computation and controller implementation. To overcome these challenges, we propose a two-fold approach. First, a decentralized computational strategy is devised to decompose the high-dimensional GARE into multiple uniform, lower-dimensional GAREs.  Second, a dynamic average consensus-based decoupling algorithm is developed to resolve the inherent coupling structure of the robust control law, thereby facilitating its distributed implementation. Finally, numerical simulations on the formation control of multi-vehicle systems with feedback-linearized dynamics are conducted to validate the effectiveness of the proposed algorithms.
		\end{abstract}
	\begin{IEEEkeywords}
		multi-agent systems, robust control,  leader-follower consensus, min-max game. 
	\end{IEEEkeywords}
	\IEEEpeerreviewmaketitle

	
\section{Introduction}
In recent decades, the consensus control of multi-agent systems (MASs) has garnered substantial attention, primarily due to its versatile applicability across diverse engineering domains. Notable applications include multi-vehicle cooperation \cite{4026052}, distributed fault detection \cite{Liang2021JFI}, and sensor networks \cite{10947572}. The core objective of this control paradigm is to achieve state synchronization among agents by establishing agreement on critical variables, relying exclusively on local neighbor information. The consensus framework can be fundamentally categorized into leaderless consensus \cite{li2009consensus} and leader-following consensus \cite{Wen2}, distinguished by the presence of a virtual or real leader. Notably, leader-following consensus has exhibited broader applicability, as it enables the intentional design of the leader’s trajectory to guide the collaborative motion of the entire system. 

However, with the rapid proliferation of cyber-physical systems (CPSs), MASs are increasingly vulnerable to adversarial, ``attack-like" external inputs \cite{11090164, 11143595}. Unlike stochastic disturbances, these adversarial inputs are intentionally synthesized to intelligently degrade system performance, posing profound security and resilience challenges. To effectively capture the strategic nature of this confrontation, game-theoretic approaches can be employed to formulate the interaction between the system's controllers and the worst-case attacks as a min-max zero-sum game \cite{bacsar2008h}. Notably, under appropriate formulations, the saddle-point solution of this zero-sum game systematically yields a robust $H_{\infty}$ control policy. Consequently, this game-theoretic paradigm emerges as an effective approach to resiliently handle these severe external perturbations, ensuring both prescribed disturbance attenuation and closed-loop stability \cite{zhou1996robust}.

\subsection{Related Works}Numerous studies have explored the $H_{\infty}$ consensus control of multi-agent systems. For instance, in \cite{Li2009Hinf}, a decentralized $H_{\infty}$ controller for networked MASs is designed using linear matrix inequalities (LMIs). In \cite{Liu2011,doi:10.1080/00207170903267039}, LMI-based conditions are derived to ensure MAS consensus with a prescribed $H_{\infty}$ attenuation level. The finite $L_{2}$-gain performance index for nonlinear MASs is addressed in \cite{Wen2012nonlinearL2}, while the $H_{\infty}$ consensus problem for linear agents under directed communication graphs is solved in \cite{WangJY2013HinfandH2}. Traditionally, these works have aimed to bound the $H_{\infty}$ norm of transfer functions from disturbances to performance outputs using conventional robust control techniques. More recently, there has been a surging interest in applying differential game theory to address optimal and robust consensus problems \cite{Vamvoudakis2012, Jiao2016, Adib2019Hinf,RENNSE,Zhouyan_minmax, zhu2025game}. In this game-theoretic context, the considered attacks act as adversarial players possessing greater intelligence and the capacity to maximize the cost function, thereby undermining control performance. These disruptive elements, often termed ``attack-like disturbances," are characterized by their strategic actions and are highly prevalent in the modern CPS landscape. Differential games have proven highly effective in such scenarios. For instance, differential graphical games are utilized in \cite{Vamvoudakis2012} for online adaptive learning synchronization. Zero-sum games and min-max optimization problems are widely employed to formulate robust $H_{\infty}$ control problems \cite{bacsar2008h, REN2022IJRNC}. Moreover, disturbance rejection via multi-agent zero-sum differential graphical games is proposed in \cite{Jiao2016}, and state synchronization for linear homogeneous agents is addressed in \cite{Adib2019Hinf}. In \cite{Zhouyan_minmax}, a distributed min-max strategy is designed for consensus tracking-control in the presence of external attacks. Despite these advancements, existing graphical game-based methodologies \cite{Jiao2016, Adib2019Hinf,Zhouyan_minmax} inherently achieve only local Nash equilibriums rather than a global optimal point, owing to their reliance on localized cost functions. Such formulations prevent the control inputs of individual agents from fully cooperating to minimize the global cost, ultimately resulting in conservative control policies. Given these limitations, it is highly important to investigate the broader global coalitional min-max game problem. In this paradigm, the control inputs of all agents collaborate as a unified coalition to minimize a global cost function, while the attacks form a counter-coalition striving to maximize it. Although the global linear quadratic regulator (LQR) problem for MASs has been studied \cite{zhangzhuo2021TAC}, the sheer complexity of the coupled weighting matrices makes performance tuning exceedingly difficult for practical applications. Consequently, a significant gap remains in efficiently solving the global coalitional min-max game for multi-agent systems, primarily due to the severe computational and implementational bottlenecks caused by structural coupling.
\subsection{Contributions}Motivated by the aforementioned challenges, this paper proposes an innovative framework featuring a decentralized computational strategy and a distributed information-fusion decoupling algorithm to elegantly solve the global coalitional min-max optimization problem. The core contributions of this paper are summarized as follows:
\begin{enumerate}
	\item[1)] This paper formulates the robust leader-following control of MASs under adversarial attacks as a global coalitional min-max game. By driving the agents' control inputs to collaboratively minimize the global cost function against the adversarial inputs, our methodology achieves true global optimization. This fundamentally differs from the previous graphical game approaches \cite{Jiao2016, Adib2019Hinf} that are constrained to local Nash equilibrium solutions. Notably, when these adversarial attacks manifest as disturbances, optimizing the agents' collaborative control inputs within this game-theoretic framework systematically yields a global robust $H_\infty$ control policy.
	\item[2)] The designed cost function features a straightforward and user-friendly weighting matrix structure. This is in stark contrast to the heavily coupled and intricate weighting matrices required in existing literature \cite{zhangzhuo2021TAC}. This structural simplicity empowers users to intuitively tune the cost weights, seamlessly facilitating desired system performance objectives such as accelerated tracking or enhanced energy efficiency.
	\item[3)] A decentralized computation strategy combined with a distributed decoupling algorithm is proposed to completely separate the computation and implementation of the control laws. The inherently complex, high-dimensional coupled GARE is systematically decomposed into multiple independent, lower-dimensional GAREs. The proposed decoupling algorithm is rigorously proven to converge in finite time, ensuring rapid deployment. By reducing the global high-dimensional GARE into uniform low-dimensional ones, the algorithm ensures that each agent’s computational load remains constant regardless of the total population $N$, ensuring topological universality and computational efficiency in dense networks. \end{enumerate}	
\section{Preliminaries and problem formulation}
\subsection{Basic Graph Theory}
In this section, we will review some basic concepts from graph theory. A weighted graph with $N$ nodes can be denoted as $\mathcal{G} = \left\{\mathcal{N},\mathcal{E}\right\}$, where $\mathcal{N}$ is the vertex set with $i \in \mathcal{N}$ corresponding to node $i$, and $\mathcal{E}$ is the edge set of the graph, which is a subset of $\mathcal{N}\times \mathcal{N}$. The adjacency matrix $E$ of the graph is defined as $\{a_{ij}\}$, where $a_{ii} = 0$ and $a_{ij}\neq0$ if and only if $(j,i) \in \mathcal{E}$. The in-neighborhood set of a node $i$ is denoted by $\mathcal{N}_{i} = \left\{j \in \mathcal{N} : (j, i) \in \mathcal{E}, i \not= j\right\}$. The graph Laplacian matrix is defined as $L = D- E$, where $D$ is the in-degree matrix with $d_{i} = \sum_{j \in \mathcal{N}_{i}} a_{ij}$ being the in-degree of node $i$. A path on $\mathcal{G}$ from node $i_{1}$ to node $i_{l}$ is a sequence of ordered pairs $(i_{k}, i_{k+1})$, where $k = 1, \cdots, l-1$.
	\subsection{Problem Formulation}
	Consider $N$ agents distributed over communication by graph $\mathcal{G}$ with dynamics as the following form $i  \in \mathcal{N} =  \{1,\cdots,N\}$
	\begin{equation}
		\dot{x}_{i}(t) = Ax_{i}(t)+Bu_{i}(t)+Dw_{i}(t),
		\label{eq:dyn}
	\end{equation}
	where $x_{i}(t) \in \mathbb{R}^{n}$ is the state of node $i$, $u_{i}\left(t\right) \in \mathbb{R}^{m_{1}}$ is the control input of node $i$, and $w_{i} \in \mathbb{R}^{m_{2}}$ is the adversarial attacks of node $i$, $i \in \mathcal{N}$. And $A\in \mathbb{R}^{n \times n}$, $B \in \mathbb{R}^{n\times m_{1}}$, $D\in \mathbb{R}^{n\times m_{2}}$  are the system, input and attack input matrices, respectively.\par
	The leader node's dynamics is 
	\begin{equation}
		\label{eq:ldyn}
		\dot{x}_{0}(t) = Ax_{0}(t).
	\end{equation}\par
	The neighbor error for each agent is defined as
	\begin{equation}
		\label{eq:error}
		\delta_{i}:=\sum_{j \in \mathcal{N}_{i}} a_{i j}\left(x_{i}-x_{j}\right)+g_{i}\left(x_{i}-x_{0}\right), \quad \forall i \in \mathcal{N},
	\end{equation}
	where $g_{i}$ is the pinning gain of agent $i$, $g_{i} \neq 0$ indicates that agent $i$ is pinned to the leader node, $\delta_{i} \in \mathbb{R}^{n}$. From (\ref{eq:error}), the over all neighbor error vector is given by
	\begin{equation}\label{eq:overallerror}
		\delta=\left((L+G) \otimes I_{n}\right)\xi
	\end{equation}
	where  $x=\left[
	x_{1}^{T} , x_{2}^{T} , \ldots x_{N}^{T}
	\right]^{T} $, $\delta=\left[
	\delta_{1}^{T} , \delta_{2}^{T} , \ldots , \delta_{N}^{T}
	\right]^{T} \in \mathbb{R}^{n N}$, $\underline{x}_{0} = \mathbf{1}_{N}\otimes x_{0} \in \mathbb{R}^{nN}$, $\xi =\left(x-\underline{x}_{0}\right)$ is the synchronization error, $L$ is the Laplacian of the graph and $G = \text{diag}\{g_{1},g_{2},\ldots,g_{N}\}$,
	and the over all form of neighbor error dynamics is
	\begin{equation}\label{eq:overallerrdyn}
		\dot{\delta} = (I_{N}\otimes A)\delta+ [(L+G)\otimes B]u+[(L+G)\otimes D]w,
	\end{equation}
	where $u = \left[u_{1}^{T}, u_{2}^{T},\ldots ,u_{N}^{T}\right]^{T}$ and $w = \left[w_{1}^{T}, w_{2}^{T},\ldots ,w_{N}^{T}\right]^{T}$.
	\begin{assumption}
		\label{assum:1}
		\begin{enumerate}\item[]
			\item[1)] $(A,B)$ is controllable .
			\item [2)] The graph $\mathcal{G}$ is undirected, connected and at least one pinning gain $g_{i}$  is nonzero.
		\end{enumerate}
	\end{assumption}\par
	
	\begin{remark}
		From Assumption \ref{assum:1}, one can get that $(L+G)$ is symmetric and  nonsingular \cite{Khoo2009TOM}. That is to say, $\|\xi\|_{2} \leq \bar{\sigma}((L+G)^{-1})\|\delta\|_{2}$  such that $\lim\limits_{t \to \infty }\| x_{i}\left( t\right)-x_{0}\left(t\right)  \|_{2}= 0$ if and only if $ \lim\limits_{t \to \infty }\| \delta_{i}\|_{2} =0,$ for  $\forall i \in \mathcal{N}$. 
	\end{remark}
	\begin{assumption}\label{assum:N}
		Each agent $i$ knows the network size $N$, and an upper bound of the network size is $\bar{N}$.
	\end{assumption}
	\begin{assumption}\label{assum:Li}
		Each agent $i$ can access the corresponding row of the Laplacian matrix $L+G$, i.e., agent $i$ knows $(L+G)_{i}$.
	\end{assumption}
		\begin{assumption} \label{assum:bound}
		The time derivative of the tracking error for each agent is bounded, i.e., $\|\dot{\delta_{i}}\|_{2} \leq \eta_{i}$, where $\eta_{i}$ is a positive real number.
	\end{assumption}

	\begin{remark}
Assumption \ref{assum:N} is well-founded, as there exist numerous distributed techniques capable of estimating the network size swiftly (see \cite{shim2019CDC} and references therein). Assumption \ref{assum:Li} is also justifiable. In the network, every agent is assigned a unique label, and the transmitted information between agents retains these labels. Consequently, each agent has access to the labels of all its neighboring agents. Thus, since each agent is aware of the network size $N$, it can deduce $(L+G)_{i}$ for its own corresponding row. 	Assumption \ref{assum:bound} imposes a standard requirement that the time derivative of the tracking error for each agent remains bounded. Similar assumptions can be found in literature, such as \cite{chenfei2012TAC}.
	\end{remark}

The objective of this paper is to address the robust leader-following consensus problem in the global coalitional min-max game framework by designing an appropriate controller in a distributed manner.

Consider the following quadratic performance function:

\begin{equation}
	J(\delta,u,w) = \frac{1}{2}\int_{0}^{\infty}(\delta^{T}Q\delta + u^{T}Ru - w^{T}\Gamma w)dt,
\end{equation}

where $Q \geq 0$, $(Q,I_{N}\otimes A)$ is observable, and $R > 0$ and $\Gamma > 0$ are weighting matrices to be defined.

For state feedback strategies, the corresponding value function is:

\begin{equation}
	V(\delta(t)) = \frac{1}{2}\int_{t}^{\infty}(\delta^{T}Q\delta + u^{T}Ru - w^{T}\Gamma w)dt.
\end{equation}

\begin{problem}(Distributed Global Coalitional Min-Max Game):
	The objective is to find the distributed optimal control strategies $u_{i}^{*}$ and worst-case attack strategies $w_{i}^{*}$ for each agent $i \in \{1, \dots, N\}$ that solve the following global coalitional min-max game problem:
	\begin{equation}\label{eq:minmax-prob}
		V^{*}(\delta(t)) = \min_{u}\max_{w} V(\delta(t)),
	\end{equation}
subject to the overall error dynamics in \eqref{eq:overallerrdyn}, under the strict constraint of distributed information structures.

Specifically, to solve this global optimization problem distributedly, the proposed control protocol $u_{i}$ for each agent $i$ must satisfy the following conditions:
\begin{enumerate}
\item[(1)]Local Implementation: The control law $u_{i}$ can only utilize local state information $\delta_{i}$ and information from its immediate neighbors defined by the communication graph $\mathcal{G}$.
\item[(2)]Decentralized Computation: The derivation of the control parameters (e.g., solving the associated high-dimensional generalized algebraic Riccati equation) must be decoupled into low-dimensional, local computations without relying on a centralized coordinator.
\end{enumerate}
\end{problem}

To provide a more intuitive understanding of the proposed theoretical framework, the architectural paradigm of the global coalitional zero-sum game is illustrated in Fig. \ref{fig:concept_game}. As depicted, the multi-agent system, guided by the leader node via the communication graph $\mathcal{G}$, acts as the physical interactive environment. To systematically address the system's vulnerability, all local control inputs $u_i$ are conceptually aggregated into a unified control coalition (highlighted in blue) aiming to minimize the global cost $J$. Conversely, the external attack inputs $w_i$ form an opposing adversarial coalition (highlighted in red) striving to maximize the same cost. The saddle-point solution of this global min-max game ultimately dictates the robust consensus policy.

\begin{figure*}[htbp]
	\centering
	\includegraphics[width=0.95\linewidth]{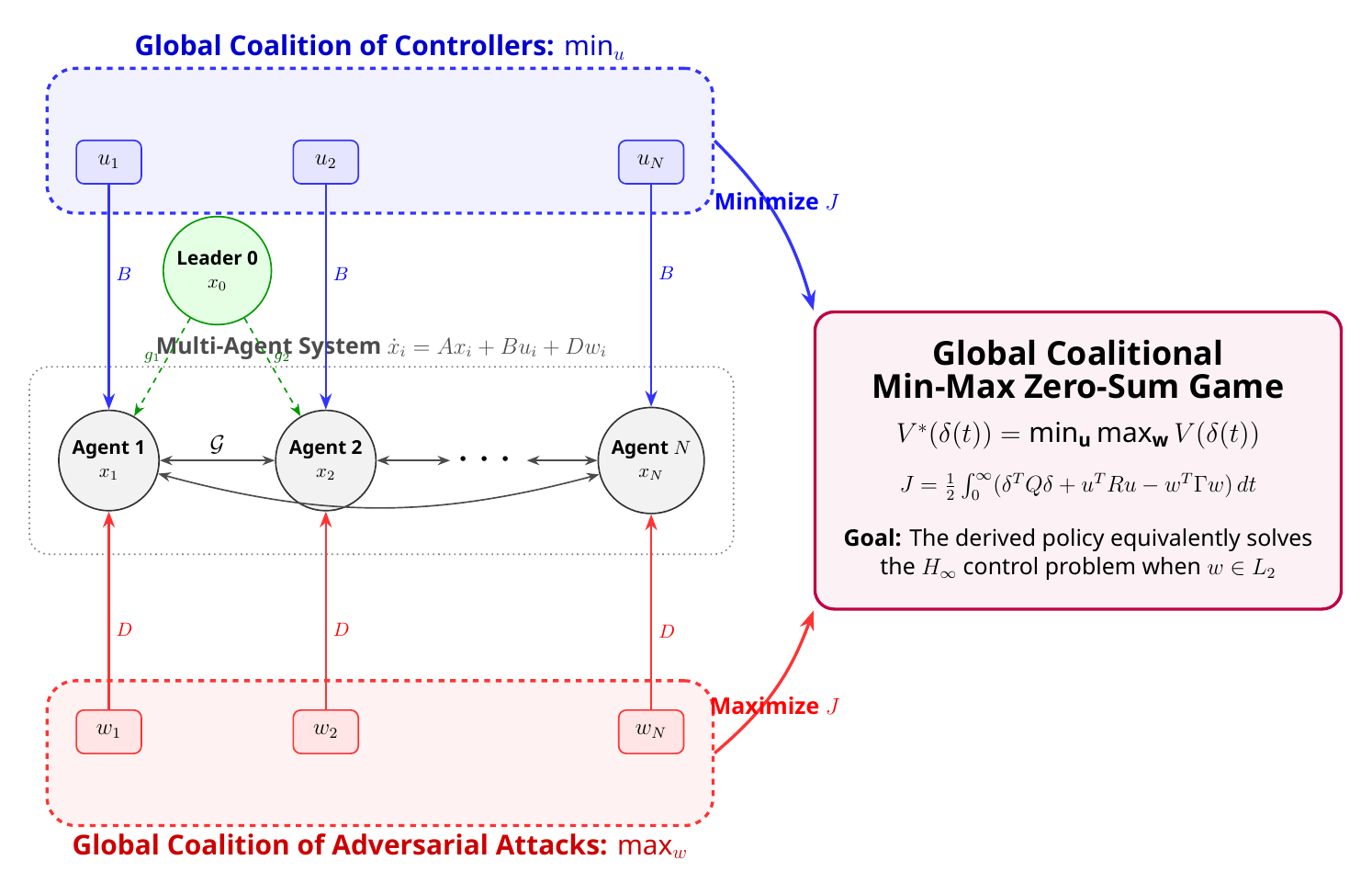} 
	\caption{Conceptual diagram of the global coalitional min-max zero-sum game framework for multi-agent systems, illustrating the structural game relationship between the controllers and the adversarial attacks.}
	\label{fig:concept_game}
\end{figure*}

By treating the worst-case adversarial attacks as external $L_2$ disturbances, solving the aforementioned min-max zero-sum game is mathematically equivalent to addressing the following bounded $L_2$-gain synchronization problem:

\begin{problem}(Bounded $L_{2}$-Gain Synchronization Problem)
	Consider system (\ref{eq:overallerrdyn}) with a concerned performance output $z$ and external $L_2$ disturbance $w$. The bounded $L_{2}$-gain synchronization problem aims to design the control input $u$ such that:
	\begin{enumerate}
		\item[(1)] {Asymptotic Synchronization (Internal Stability):} In the absence of external disturbances (i.e., $w \equiv 0$), the neighborhood tracking error $\delta(t)$ asymptotically converges to zero as $t \to \infty$.
		\item[(2)] {Robust Disturbance Attenuation ($H_{\infty}$ Performance):} For any non-zero, square-integrable external disturbance $w(t) \in L_{2}[0, T]$, the system satisfies the following bounded $L_{2}$-gain condition:
		\begin{equation}\label{eq:bounded_L2}
			\int_{0}^{T} \|z(t)\|_{M}^{2} dt \leq \gamma^{2} \int_{0}^{T} \|w(t)\|_{H}^{2} dt + \alpha(\delta(0)),
		\end{equation}
		where $M$ and $H$ are user-defined positive definite weighting matrices, $\gamma > 0$ denotes the prescribed disturbance attenuation level, and $\alpha(\cdot)$ is a non-negative continuous scalar function satisfying $\alpha(0) = 0$.
	\end{enumerate}
\end{problem}

\section{The Global Coalitional Min-max Strategies}
In this section, we propose the solution to the Global Coalitional Min-Max Game. Because the system dynamics described in \eqref{eq:overallerrdyn} are linear, the optimal solution is provided in the following lemma:
	\begin{lemma}
		The following  are the min-max strategies corresponding to \eqref{eq:minmax-prob}
		\begin{equation}\label{eq:u*}
			u^{*} = -R^{-1}\bar{B}^{T}P \delta,
		\end{equation}
		\begin{equation}\label{eq:w*}
			w^{*} = \Gamma^{-1}\bar{D}^{T}P \delta,
		\end{equation}
		where $\bar{A} = I_{N}\otimes A$, $\bar{B} = (L+G)\otimes B$, $\bar{D} = (L+G)\otimes D$, and $P$ is the unique positive-definite soltion of the following generalized algebraic Riccati equation 
		\begin{equation}\label{eq:Riccatioverall}
			\bar{A}^{T}P+P\bar{A}+Q- P\bar{B}R^{-1}\bar{B}^{T}P+P\bar{D}\Gamma^{-1}\bar{D}^{T}P = 0.
		\end{equation}
	\end{lemma}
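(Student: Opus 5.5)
The proof is a standard completion-of-squares (verification) argument for linear-quadratic zero-sum differential games, applied to the lifted error dynamics \eqref{eq:overallerrdyn} with $\bar A=I_N\otimes A$, $\bar B=(L+G)\otimes B$, $\bar D=(L+G)\otimes D$.

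\emph{Step 1: the Hamiltonian.} I take the quadratic candidate $V(\delta)=\frac12\delta^{T}P\delta$, with $P>0$ solving \eqref{eq:Riccatioverall}, and write the Hamiltonian
\[
H(\delta,u,w)=\delta^{T}P(\bar A\delta+\bar Bu+\bar Dw)+\tfrac12\left(\delta^{T}Q\delta+u^{T}Ru-w^{T}\Gamma w\right).
\]
The stationarity conditions $\partial H/\partial u=0$ and $\partial H/\partial w=0$ give exactly \eqref{eq:u*} and \eqref{eq:w*}. Since $R>0$ and $\Gamma>0$, $H$ is strictly convex in $u$ and strictly concave in $w$, so this stationary point is a saddle point of $H$.

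\emph{Step 2: completing the square.} Substituting \eqref{eq:Riccatioverall}, I obtain the identity
\[
\dot V+\tfrac12\left(\delta^{T}Q\delta+u^{T}Ru-w^{T}\Gamma w\right)=\tfrac12(u-u^{*})^{T}R(u-u^{*})-\tfrac12(w-w^{*})^{T}\Gamma(w-w^{*}),
\]
which holds along trajectories of \eqref{eq:overallerrdyn} for arbitrary $u,w$. Integrating from $t$ to $T$ and letting $T\to\infty$ yields
\[
V(\delta(t);u,w)=V^{*}(\delta(t))+\tfrac12\int_t^\infty\left(\|u-u^{*}\|_R^2-\|w-w^{*}\|_\Gamma^2\right)d\tau,
\]
provided $\delta(T)^{T}P\delta(T)\to0$. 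This gives the saddle-point inequalities $V(u^{*},w)\le V(u^{*},w^{*})\le V(u,w^{*})$, and therefore $\min_u\max_w V=\max_w\min_u V=\frac12\delta^{T}P\delta$ with minimizer $u^{*}$ and maximizer $w^{*}$.

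\emph{Step 3: the limit and existence/uniqueness of $P$.} This is the main obstacle. The limit requires the closed loop $\bar A-\bar BR^{-1}\bar B^{T}P+\bar D\Gamma^{-1}\bar D^{T}P$ to be Hurwitz, together with restricting admissible $u$ to stabilizing (or $L_2$) policies. I would proceed as follows.

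\begin{itemize}
\item[(a)] Assume $\Gamma$ is large enough (i.e.\ $\gamma$ above the optimal attenuation level) that $\bar B R^{-1}\bar B^{T}-\bar D\Gamma^{-1}\bar D^{T}\ge0$, or more generally that the stabilizing solution exists. Controllability of $(A,B)$ and nonsingularity of $L+G$ (Remark 1) give stabilizability of $(\bar A,\bar B)$, and $(Q,\bar A)$ is observable by hypothesis.
\item[(b)] Rewrite the GARE as a Lyapunov equation for $A_{c}=\bar A-\bar BR^{-1}\bar B^{T}P$:
\[
A_{c}^{T}P+PA_{c}=-\left(Q+P\bar BR^{-1}\bar B^{T}P+P\bar D\Gamma^{-1}\bar D^{T}P\right).
\]
Observability then forces $A_c$ Hurwitz whenever $P>0$, which makes $u^{*}$ stabilizing when $w=0$.
\item[(c)] Stability of the full closed loop under $w^{*}$ follows from the standard $H_\infty$ Riccati theory (e.g.\ \cite{bacsar2008h}), taking $P$ to be the stabilizing solution.
\item[(d)] For uniqueness, if $P_1$ and $P_2$ are both positive-definite stabilizing solutions, then $P_1-P_2$ satisfies a homogeneous Sylvester equation whose coefficient matrices are Hurwitz, so $P_1=P_2$.
\end{itemize}

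I would cite these facts from \cite{bacsar2008h,zhou1996robust} rather than reprove them, stating the needed condition on $\Gamma$ explicitly.
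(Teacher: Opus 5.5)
Your proposal is correct, and its Step 1 is exactly the paper's whole proof. The paper assumes $V=\frac12\delta^TP\delta$, sets $\partial H/\partial u=0$ and $\partial H/\partial w=0$ in the Bellman equation, and substitutes back to get \eqref{eq:Riccatioverall}; that is, it derives only first-order necessary conditions.

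Your Steps 2--3 supply the sufficiency that the lemma's proof omits. The completion-of-squares identity shows the stationary pair is a saddle point of the cost itself (not just of $H$), with value $\frac12\delta^TP\delta$. The Hurwitz/admissibility discussion justifies dropping $\delta(T)^TP\delta(T)$. The paper does use both ingredients, but only later, in the proof of Theorem 1: the Lyapunov rewriting for $\bar A_c$, and the $w$-half of your identity for the $L_2$-gain bound.

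Your uniqueness argument is the sharper one. The Sylvester argument gives uniqueness among \emph{stabilizing} solutions, which is the right notion when $\bar BR^{-1}\bar B^T-\bar D\Gamma^{-1}\bar D^T$ is indefinite, because positive-definite solutions then need not be unique. For example, take $N=n=1$, $g_1=1$, $A=-2$, $B=D=R=Q=1$, $\Gamma=\frac12$. The GARE becomes $P^2-4P+1=0$, with roots $2\pm\sqrt{3}>0$, and only $2-\sqrt{3}$ is stabilizing. So the paper's ``unique positive-definite'' claim tacitly relies on the semidefiniteness condition in the following Remark.

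One small correction to your (a): enlarging $\Gamma$ makes that matrix semidefinite only if $\operatorname{range}D\subseteq\operatorname{range}B$. The hypothesis to state is your fallback one, existence of the stabilizing solution, which does hold once $\Gamma$ is large enough.
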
 
	\begin{proof}
		According to the Bellman optimality principle, the following Bellman equation is obtained
		\begin{equation}\label{eq:HJB}
			\begin{aligned}
				H\left(\delta,u,w\right) &= \left(\frac{\partial V}{\partial \delta}\right)^{T}\left(\bar{A}\delta+\bar{B}u+\bar{D}w\right)\\&+\frac{1}{2}\left(\delta^{T}Q\delta+u^{T}Ru-w^{T}\Gamma w\right) =0.
			\end{aligned}
		\end{equation}
		For state feedback strategies, the value function has the following quadratic form
		\begin{equation}\label{eq:Vquad}
			V\left(\delta(t)\right) =\frac{1}{2} \delta^{T}(t)P\delta(t).
		\end{equation}
		The optimal control strategy $u^{*}$ and the worst-case attack $w^{*}$ satisfy the necessary condition $\frac{\partial H}{\partial u}=0$ and $\frac{\partial H}{\partial w}=0$, respectively. Then, one has $u^{*} = -R^{-1}\bar{B}^{T}P \delta$, $w^{*} = \Gamma^{-1}\bar{D}^{T}P \delta$, which are the same with \eqref{eq:u*} and \eqref{eq:w*}.
		
		Substitute \eqref{eq:u*} and \eqref{eq:w*} into \eqref{eq:HJB}, the GARE \eqref{eq:Riccatioverall} can be obtained.
	\end{proof}
	\begin{remark}
		The generalized algebraic Riccati equation is often occured in solving the standard $H_{\infty}$ problem. To guarantee it has a unique positive-definite solution, $(Q,\bar{A})$ needs to be observable, $(\bar{A},-\bar{B}R^{-1}\bar{B}^{T}+\bar{D}\Gamma^{-1}\bar{D}^{T})$ needs to be stabilizable, and $\bar{B}R^{-1}\bar{B}^{T}-\bar{D}\Gamma^{-1}\bar{D}^{T}$ should be positive semi-definite. These condition can be satisfied by choosing large $\Gamma$.
	\end{remark}
	
	With above lemma, it will be shown that the controller solves the bounded $L_{2}$-gain synchronization problem. 
	
	\begin{theorem}
		The controller \eqref{eq:u*} solves the bounded $L_{2}$-gain problem.
	\end{theorem}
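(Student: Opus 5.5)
The plan is a standard completion-of-squares argument with the quadratic storage function $V(\delta)=\frac{1}{2}\delta^{T}P\delta$. Here $P>0$ is the solution of the GARE \eqref{eq:Riccatioverall} given by the preceding lemma. The performance output is taken as $\|z\|_{M}^{2}=\delta^{T}Q\delta+u^{T}Ru$, and the weighting is $\gamma^{2}\|w\|_{H}^{2}=w^{T}\Gamma w$, so $H$ and $\gamma$ are read off from $\Gamma$. The two parts of the bounded $L_{2}$-gain problem are then proved separately.

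\emph{Step 1 (dissipation inequality).} Differentiate $V$ along \eqref{eq:overallerrdyn} for arbitrary $u$ and $w$:
\[
\dot V=\delta^{T}P(\bar A\delta+\bar Bu+\bar Dw).
\]
Use \eqref{eq:Riccatioverall} to replace $\delta^{T}P\bar A\delta$ by
\[
-\tfrac12\delta^{T}(Q-P\bar BR^{-1}\bar B^{T}P+P\bar D\Gamma^{-1}\bar D^{T}P)\delta ,
\]
and complete the squares in $u$ and $w$. This gives the identity
\[
\dot V+\tfrac12(\delta^{T}Q\delta+u^{T}Ru-w^{T}\Gamma w)=\tfrac12\|u-u^{*}\|_{R}^{2}-\tfrac12\|w-w^{*}\|_{\Gamma}^{2}.
\]
With $u=u^{*}$ from \eqref{eq:u*}, the right-hand side is $\le 0$. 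Integrating over $[0,T]$ and using $V(\delta(T))\ge 0$ yields
\[
\int_{0}^{T}(\delta^{T}Q\delta+u^{*T}Ru^{*})\,dt\le\int_{0}^{T}w^{T}\Gamma w\,dt+\delta(0)^{T}P\delta(0).
\]
This is exactly \eqref{eq:bounded_L2} with $\alpha(\delta(0))=\delta(0)^{T}P\delta(0)$, which is nonnegative, continuous, and satisfies $\alpha(0)=0$.

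\emph{Step 2 (internal stability for $w\equiv0$).} Set $w=0$ in the identity. Then
\[
\dot V=-\tfrac12\delta^{T}(Q+P\bar BR^{-1}\bar B^{T}P+P\bar D\Gamma^{-1}\bar D^{T}P)\delta\le-\tfrac12\delta^{T}Q\delta\le0 .
\]
So $V$ is a Lyapunov function and $\delta$ is bounded. Asymptotic convergence then follows from LaSalle's invariance principle. On the set where $\dot V=0$ we need $Q\delta=0$ and $\bar B^{T}P\delta=0$, so $u^{*}=0$ and the closed loop reduces to $\dot\delta=\bar A\delta$ with $Q\delta\equiv0$. Observability of $(Q,\bar A)$ then forces $\delta\equiv0$. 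By the earlier remark on nonsingularity of $L+G$, this gives $x_{i}\to x_{0}$ for all $i$.

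\emph{Main obstacle.} The delicate point is Step 2 and the role of $P>0$. The argument relies on the lemma's positive-definite, stabilizing solution. Under the remark's conditions (observability and a large $\Gamma$), $P$ is positive definite, which is what makes $V$ radially unbounded. Without this, one would have to argue directly that $\bar A-\bar BR^{-1}\bar B^{T}P$ is Hurwitz. The LaSalle/observability argument above settles that, provided the invariant-set reasoning is done carefully, since $\dot V$ is only negative semidefinite. Everything else is routine algebra.
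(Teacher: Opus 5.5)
Your proof is correct and follows essentially the same route as the paper: the storage function $V=\frac{1}{2}\delta^{T}P\delta$, the completing-the-square identity obtained from the GARE and integrated over $[0,T]$ using $V(\delta(T))\ge 0$ (which gives $\alpha=2V$), and the closed-loop Lyapunov equation for the case $w\equiv 0$. Your LaSalle/observability argument spells out the step the paper only asserts, namely that this Lyapunov equation makes $\bar A-\bar BR^{-1}\bar B^{T}P$ Hurwitz. Absorbing $\Gamma$ into $H$, instead of taking $\gamma=\sqrt{\bar{\sigma}(\Gamma)}$ as the paper does, is an equally valid choice.
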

	
	\begin{proof}
		First, it shows that when $w=0$, the systems achieve synchronization. Substitute the controller \eqref{eq:u*} and $w=0$ into \eqref{eq:overallerrdyn}, one has that the closed-loop dynamics
		\begin{equation}
			\dot{\delta} = \left(\bar{A}-\bar{B}R^{-1}\bar{B}^{T}P\right)\delta = \bar{A}_c \delta.
		\end{equation} 
		From \eqref{eq:Riccatioverall}, it can be obtained
		\begin{equation}
			\bar{A}_{c}^{T}P+P\bar{A}_{c} + P\bar{B}R^{-1}\bar{B}^{T}P+P\bar{D}\Gamma^{-1}\bar{D}^{T}P+Q = 0,
		\end{equation}  
		which is a Lyapunov equaiton, and means $\bar{A}_{c}$ is Hurwitz. That is to say, the neighbour error converges to zero asymptotically. 
		
		Then, it is proven that the bounded $L_{2}$-gain condition \eqref{eq:bounded_L2} is satisfied when $w=0$. Substitute \eqref{eq:u*} and \eqref{eq:w*} into \eqref{eq:HJB}, and consider \eqref{eq:Vquad},  one has 
		\begin{equation}\label{eq:HJB*}
			\begin{aligned}
			\left(\frac{\partial V}{\partial \delta}\right)^{T}&\left(\bar{A}\delta+\bar{B}u^{*}+\bar{D}w^{*}\right)\\&+\frac{1}{2}\left(\delta^{T}Q\delta+(u^*)^{T}Ru^{*}-(w^{*})^{T}\Gamma w^{*}\right) =0.
			\end{aligned}
		\end{equation}
	\end{proof}
	Consider the error dynamics driven by $u^{*}$ and $w\neq0$
	\begin{equation}
		\dot{\delta} = \bar{A}\delta+\bar{B}u^{*}+\bar{D}w,
	\end{equation}
	one can obtained that 
	\begin{equation}
		\frac{d V}{d t} = \left(\frac{\partial V}{\partial \delta}\right)^{T}\left(\bar{A}\delta+\bar{B}u^{*}+\bar{D}w\right).
	\end{equation}
	
	Therefore from \eqref{eq:HJB*}, it follows that 
	\begin{equation}\label{eq:L2ineq}
		\begin{aligned}
			\frac{d V}{d t}  =& -\frac{1}{2}\left(\delta^{T}Q\delta+(u^*)^{T}Ru^{*}-w^{T}\Gamma w\right) \\&-\frac{1}{2}\left(w^{*}-w\right)^{T}\Gamma\left(w^{*}-w\right)
			\\\leq &-\frac{1}{2}\left(\delta^{T}Q\delta+(u^*)^{T}Ru^{*}-w^{T}\Gamma w\right).
		\end{aligned}
	\end{equation}
	
	Integrate \eqref{eq:L2ineq} from $0\to T$, it follows that
	\begin{equation}\begin{aligned}
					V&(\delta(T)) - V(\delta(0)) \\&+\frac{1}{2}\int_{0}^{T}\left(\delta^{T}Q\delta+(u^*)^{T}Ru^{*}-w^{T}\Gamma w\right)\leq 0.
		\end{aligned}
	\end{equation}
	Since $V(\delta(T))>0$, one has 
	\begin{equation}
		\int_{0}^{T}\left(\delta^{T}Q\delta+(u^*)^{T}Ru^{*}\right)\leq\int_{0}^{T}\left( w^{T}\Gamma w\right)+2V(\delta(0)).
	\end{equation}
	Let $z = \left(\delta^{T}, u^{T}\right)^{T}$, $M = diag\{Q,R\}$, $\gamma = \sqrt{\bar{\sigma}(\Gamma)}$, and $\alpha(\cdot) = 2V(\cdot)$, it can be obtained that the bounded $L_{2}$-gain condition \eqref{eq:bounded_L2} is satisfied.
	\begin{remark}
		In summary, Theorem 1 derives a robust optimal control policy that ensures prescribed $H_\infty$ performance against disturbances. Despite this significant theoretical advantage, the practical application of these centralized strategies faces two severe bottlenecks. First, computing the control gains relies heavily on a centralized information structure. Specifically, solving the high-dimensional GARE \eqref{eq:Riccatioverall} necessitates complete knowledge of the global communication topology and the weighting matrices of all agents. More critically, the global GARE involves system matrices of dimension $Nn \times Nn$. Since the computational complexity of solving a Riccati equation typically scales cubically with the matrix dimension (i.e., $\mathcal{O}(N^3n^3)$), the computational burden suffers from a severe dimensionality explosion as the number of agents $N$ increases. This curse of dimensionality renders the centralized computation completely intractable for large-scale networks. Second, the resulting optimal control laws exhibit inherent structural coupling among neighboring agents, preventing independent execution. These constraints severely limit the scalability and autonomy of the multi-agent system. Consequently, overcoming these computational and implementational barriers will be the primary focus of the subsequent section.
	\end{remark}
	
	\section{Decomposition of the Strategies}
To overcome the centralization and coupling challenges identified in the previous section, this section proposes a comprehensive framework for the decentralized computation and distributed implementation of the global min-max strategies.

The proposed methodology is two-fold. First, a decentralized computational approach is developed by systematically designing the structures of the global weighting matrices. This strategic formulation decomposes the complex, high-dimensional GARE into multiple uniform, low-dimensional GAREs, empowering each agent to compute its control parameters independently without relying on global network information. Furthermore, to address the execution bottleneck, a dynamic average consensus-based algorithm is introduced to resolve the inherent structural coupling of the optimal control laws. By facilitating real-time state decoupling, this algorithm enables fully distributed execution, thereby significantly promoting the scalability and robustness of the overall system.

The remainder of this section details this decoupling methodology, beginning with the computation decomposition strategy in the following subsection.
	\subsection{Compuatation decomposition}
Note that the GARE \eqref{eq:Riccatioverall} contains the complete graph topology information, which is global information that is challenging for each agent to obtain. Therefore, the GARE needs to be decomposed in order to make the computation implementable. Therefore, the following weighting matrices are proposed
	\begin{equation}\label{eq:weightQ}
		Q = diag\{Q_{1}, Q_{2}, \ldots,Q_{N}\},
	\end{equation}
	\begin{equation}\label{eq:weightR}
		R  =[(L+G)\otimes I_{m_{1}}]^{T}\bar{R}[(L+G)\otimes I_{m_{1}}],
	\end{equation}
	\begin{equation}\label{eq:weightGamma}
		\Gamma =[(L+G)\otimes I_{m_{2}}]^{T}\bar{\Gamma}[(L+G)\otimes I_{m_{2}}],
	\end{equation}
	where $\bar{R} = diag\{R_{1}, \ldots,R_{N}\}$, $\bar{\Gamma} = diag\{\gamma^{2}_{1}I_{m_{2}},  \ldots,\gamma^{2}_{N}I_{m_{2}}\}$ $Q_{i}\geq 0$, $R_{i}>0$ are the local weighting matrices chosen by each agent, $\gamma_{i} >0$ is a parameter associated with the $L_{2}$-gain, $i =1, 2, \ldots N$. 
	
It is important to highlight that although the weighting matrices contain the graph information in the form of $L+G$, this information is not directly used in the computation process. The purpose of incorporating $L+G$ in the weighting matrices is to capture the overall network structure and facilitate decomposition, rather than to perform calculations involving $L+G$ itself.

Furthermore, it should be noted that $(L+G)$ is positive definite, which implies that tuning the local weighting matrices will correspondingly affect the global weighting. Increasing or decreasing the local weighting matrices will scale up or down the global weighting, respectively. This characteristic allows for flexibility in adjusting the impact of local behaviors on the overall system performance.

Then, the following result provides the decomposition of the strategies that solve the min-max problem \eqref{eq:minmax-prob}.
	
	\begin{theorem}
		The equations \eqref{eq:ucoupled} and \eqref{eq:wcoupled} are the optimal local strategy and the worst-case attack of agent $i$ to solve the global coalitional min-max problem \eqref{eq:minmax-prob} with weighting matrces choosing as \eqref{eq:weightQ}, \eqref{eq:weightR} and \eqref{eq:weightGamma}.
		\begin{equation}\label{eq:ucoupled}
			u^{*}_{i}=\frac{\left( \sum_{j=1}^{N} a_{i j} u^{*}_{j}-K_{i} \delta_{i}\right)}{\left( \sum_{j=1}^{N} a_{i j}+g_{i}\right)}, \forall i \in \mathcal{N},
		\end{equation}
		\begin{equation}\label{eq:wcoupled}
			w^{*}_{i}=\frac{\left( \sum_{j=1}^{N} a_{i j} u^{*}_{j}+L_{i} \delta_{i}\right)}{\left( \sum_{j=1}^{N} a_{i j}+g_{i}\right)}, \forall i \in \mathcal{N},
		\end{equation}
		where $K_{i} = R_{i}^{-1}B^{T}P_{i}$, $L_{i} = 1/\gamma_{i}^{2}D^{T}P_{i}$ and $P_{i}=P_{i}^{T}>0$ is the solution of the following local GARE:
		\begin{equation}\label{eq:localGARE}
			P_{i} A+A^{T} P_{i}+Q_{i}-P_{i} B R_{i}^{-1} B^{T} P_{i}+1/\gamma_{i}^{2}P_{i}DD^{T}P_{i}=0.
		\end{equation}
	\end{theorem}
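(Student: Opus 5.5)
The plan is to guess a block-diagonal solution of the global GARE \eqref{eq:Riccatioverall}, check that the weights \eqref{eq:weightQ}--\eqref{eq:weightGamma} make the equation separate into the local equations \eqref{eq:localGARE}, and then unfold the centralized strategies \eqref{eq:u*}--\eqref{eq:w*} of Lemma 1 row by row. Write $M = L+G$. By Assumption \ref{assum:1} and the subsequent remark, $M$ is symmetric and nonsingular. Hence $R$ and $\Gamma$ are positive definite and invertible, with
\[
R^{-1} = (M^{-1}\otimes I_{m_1})\bar{R}^{-1}(M^{-1}\otimes I_{m_1}),
\qquad
\Gamma^{-1} = (M^{-1}\otimes I_{m_2})\bar{\Gamma}^{-1}(M^{-1}\otimes I_{m_2}).
\]

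\textbf{Step 1: the quadratic terms become block diagonal.} Using the mixed-product rule $(M\otimes B)(M^{-1}\otimes I_{m_1}) = I_N\otimes B$, I expect
\[
\bar{B}R^{-1}\bar{B}^{T} = (I_N\otimes B)\bar{R}^{-1}(I_N\otimes B^{T}) = \mathrm{diag}\{BR_1^{-1}B^{T},\ldots,BR_N^{-1}B^{T}\},
\]
and in the same way $\bar{D}\Gamma^{-1}\bar{D}^{T} = \mathrm{diag}\{\gamma_1^{-2}DD^{T},\ldots,\gamma_N^{-2}DD^{T}\}$. The remaining data are already block diagonal: $\bar{A} = I_N\otimes A$ and $Q$ is given by \eqref{eq:weightQ}. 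So every coefficient in \eqref{eq:Riccatioverall} is block diagonal with the graph entirely cancelled.

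\textbf{Step 2: the global GARE splits.} Take $P = \mathrm{diag}\{P_1,\ldots,P_N\}$, where each $P_i$ solves \eqref{eq:localGARE}. Then the $i$-th diagonal block of \eqref{eq:Riccatioverall} is exactly \eqref{eq:localGARE}, and all off-diagonal blocks vanish identically. Hence this $P$ solves the global GARE. By the uniqueness of the positive-definite solution invoked in Lemma 1 (and the conditions in the remark following it), it is \emph{the} solution, so $P = \mathrm{diag}\{P_i\}$.

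\textbf{Step 3: recover the local strategies.} Substituting into \eqref{eq:u*} gives
\[
u^{*} = -(M^{-1}\otimes I_{m_1})\bar{R}^{-1}(I_N\otimes B^{T})P\delta = -(M^{-1}\otimes I_{m_1})\,\mathrm{diag}\{K_i\}\,\delta .
\]
Equivalently, $(M\otimes I_{m_1})u^{*} = -\mathrm{diag}\{K_i\}\delta$. Reading off row $i$ gives $(d_i+g_i)u_i^{*} - \sum_j a_{ij}u_j^{*} = -K_i\delta_i$, which is \eqref{eq:ucoupled}.

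The same computation on \eqref{eq:w*} yields $(M\otimes I_{m_2})w^{*} = \mathrm{diag}\{L_i\}\delta$, that is, $(d_i+g_i)w_i^{*} - \sum_j a_{ij}w_j^{*} = L_i\delta_i$. This is \eqref{eq:wcoupled}, except that the derivation produces a sum over the neighbours' \emph{attack} inputs $w_j^{*}$, whereas the printed \eqref{eq:wcoupled} has $u_j^{*}$. I would therefore read the $u_j^{*}$ in \eqref{eq:wcoupled} as a typo for $w_j^{*}$ and state the result in that corrected form.

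\textbf{Main obstacle.} The algebra is routine; the delicate part is justifying existence and uniqueness, at both the local and the global level. Locally, each \eqref{eq:localGARE} needs a stabilizing positive-definite solution. I would secure this by assuming $(Q_i^{1/2},A)$ observable, $(A,B)$ controllable (Assumption \ref{assum:1}), and $\gamma_i$ large enough that $BR_i^{-1}B^{T} - \gamma_i^{-2}DD^{T} \geq 0$, mirroring the remark after Lemma 1. Globally, these local conditions must imply the hypotheses used for uniqueness of the global solution. This follows because, after Step 1, those hypotheses are themselves block diagonal: $(Q,\bar{A})$ is observable iff each $(Q_i,A)$ is, and the sign condition on the quadratic term holds blockwise. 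So uniqueness of the global $P$ reduces to uniqueness of each $P_i$, which closes the argument.
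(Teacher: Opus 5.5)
Your proposal is correct and takes essentially the same route as the paper. You substitute the structured weights so that $\bar{B}R^{-1}\bar{B}^{T}$ and $\bar{D}\Gamma^{-1}\bar{D}^{T}$ collapse to block-diagonal form, take $P=\mathrm{diag}\{P_1,\ldots,P_N\}$ built from the local GAREs, and read off \eqref{eq:ucoupled}--\eqref{eq:wcoupled} row by row from $((L+G)\otimes I)u^{*}=-\mathrm{diag}\{K_i\}\delta$ and its analogue for $w^{*}$.

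Your reading of $u_j^{*}$ in \eqref{eq:wcoupled} as a typo for $w_j^{*}$ is right; the paper's own derivation via \eqref{eq:wcoupledoverall} produces $w_j^{*}$. Your blockwise check that the global observability, stabilizability and sign hypotheses follow from the local ones supplies the uniqueness argument that the paper only asserts.
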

	The obsevability of $(Q_{i}^{\frac{1}{2}}, A)$  and the stabilizability of $(A,-BR_{i}^{-1}B^{T}+1/\gamma_{i}^{2}DD^{T})$  guarantee that the local GARE (\ref{eq:localGARE}) has a positive definite solution $P_{i}$.
	
	\begin{proof}
		Substituting the weighting matrices \eqref{eq:weightQ}, \eqref{eq:weightR} and \eqref{eq:weightGamma} into the global GARE \eqref{eq:Riccatioverall}, and considering 
		\begin{equation}
		\begin{aligned}
			\bar{B}R^{-1}\bar{B} &= ((L+G)\otimes B)((L+G)\otimes I_{m})^{-1}\bar{R}^{-1}\\&\times((L+G)\otimes I_{m})^{-T}((L+G)\otimes B)^{T} \\&= (I_{N}\otimes B)\bar{R}^{-1}(I_{N}\otimes B)^{T},
		\end{aligned}
	\end{equation} 
   \begin{equation}
   	\begin{aligned}
   		\bar{D}\Gamma^{-1}\bar{D}^{T}&= ((L+G)\otimes D)((L+G)\otimes I_{m})^{-1}\bar{\Gamma}^{-1}\\&\times((L+G)\otimes I_{m})^{-T}((L+G)\otimes D)^{T} \\&= (I_{N}\otimes D)\bar{R}^{-1}(I_{N}\otimes D)^{T},
   	\end{aligned}
   \end{equation}
		the following equation is obtained 
		\begin{equation}\label{eq:Riccatiweight}
			P\bar{A}+\bar{A}^{T}P +Q -P\hat{B}\bar{R}^{-1}\hat{B}^{T}P +P\hat{D}\bar{\Gamma}^{-1}\hat{D}^{T}P = 0,
		\end{equation}
		where $\hat{B} = I_{N}\otimes B$, $\hat{D} = I_{N} \otimes D$. Since $\bar{A}$, $\bar{R}$, $\bar{\Gamma}$, $\hat{B}$, $\hat{D}$ and $Q$ are all block diagonal matrices, it can be concluded that
		\begin{equation}\label{eq:diagP}
			P = diag\{P_{1}, P_{2},\ldots,P_{N}\}
		\end{equation} is the solution of \eqref{eq:Riccatiweight}, where $P_{i}$, $i =1, 2, \cdots, N$ is the solution of the corresponding local GARE \eqref{eq:localGARE}.
		Then, substituting \eqref{eq:diagP} into the optimal control policy \eqref{eq:u*}, one has 
		\begin{equation}
			u^{*} = -[(L+G)\otimes I_{m1}]^{-1} \bar{R}^{-1}\hat{B}^{T}P\delta.
		\end{equation}
		Considering the diagonal structures of $\bar{R}$, $\hat{B}$ and $P$, the following algebraic equation can be obtained
		\begin{equation}\label{eq:ucoupledoverall}
		\left((L+G)\otimes I_{m} \right)\left[\begin{array}{c}	u_{1}^{*}\\u_{2}^{*}\\ \vdots \\u_{N}^{*}\end{array}\right] =- \left[\begin{array}{c}	K_{1}\delta_{1}\\K_{2}\delta_{2}\\ \vdots\\ K_{N}\delta_{N}\\ \end{array}\right].
	\end{equation}
		where $K_{i}$ is defined as $K_{i} = R_{i}^{-1}B^{T}P_{i}$. Solving $u_{i}^{*}$ from \eqref{eq:ucoupledoverall}, \eqref{eq:ucoupled} can be obtained. 
		
		Similarly, substituting \eqref{eq:diagP} into the worst-case attack \eqref{eq:w*}, one can get the worst-case attack maximizing the global cost function as 
		\begin{equation}
			w^{*} = [(L+G)\otimes I_{m1}]^{-1} \bar{\Gamma}^{-1}\hat{D}^{T}P\delta.
		\end{equation}
	Since $\hat{D}$ is also block-diagonal matrix, one has 
		\begin{equation}\label{eq:wcoupledoverall}
		\left((L+G)\otimes I_{m} \right)\left[\begin{array}{c}	w_{1}^{*}\\w_{2}^{*}\\ \vdots \\w_{N}^{*}\end{array}\right] = \left[\begin{array}{c}	L_{1}\delta_{1}\\L_{2}\delta_{2}\\ \vdots\\ L_{N}\delta_{N}\\ \end{array}\right],
	\end{equation}
   where $L_{i} = 1/\gamma_{i}^{2}D^{T}P_{i}$. Then, solving $w_{i}^{*}$ from \eqref{eq:wcoupledoverall}, one has \eqref{eq:wcoupled}.
	\end{proof}
Based on \eqref{eq:ucoupled}, \eqref{eq:wcoupled}, and \eqref{eq:localGARE}, it is evident that while each agent can compute the min-max strategy gains locally, the resulting strategies remain coupled and cannot be implemented by individual agents directly. As a result, a dynamic average consensus-based algorithm is proposed in the following subsection. This algorithm aims to decouple the min-max strategies presented in \eqref{eq:u*} and \eqref{eq:w*}, thereby enabling their distributed implementation. By leveraging the dynamic average consensus mechanism, the coupling between agents' strategies is effectively addressed, paving the way for individual agents to implement their respective decoupled strategies.
	\subsection{Distributed implementation}
In this subsection, a distributed average consensus-based algorithm (Algorithm 1) is proposed to decouple the coupled min-max strategies. The algorithm utilizes a fast dynamic average consensus-based approach to fuse the local information from each agent and effectively decouple the structures of the optimal control \eqref{eq:ucoupled} and worst-case attack \eqref{eq:wcoupled}.  The proposed algorithm is proven to converge in finite time. Prior to presenting the algorithm, several useful lemmas are introduced.
	\begin{lemma}(see in \cite{robustDAC})\label{lemma:M}
		 For any strongly connected undirected graph of order $n$, we have $M \triangleq\left(I_n-\frac{1}{n} \mathbf{1}_n \mathbf{1}_n^{\top}\right)={L}({L})^{\dagger}=({L})^{\dagger} {L}$, where $(\cdot)^{\dagger}$denotes the generalized inverse.
	\end{lemma}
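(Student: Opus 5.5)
The identity $M = L L^{\dagger} = L^{\dagger} L$ follows from the spectral decomposition of the Laplacian of a connected undirected graph. I would argue entirely through that decomposition. Here $L$ is the graph Laplacian of order $n$. The graph is undirected, so $L$ is real symmetric and positive semidefinite. We can therefore write $L = U \Lambda U^{\top}$ with $U$ orthogonal and $\Lambda = \mathrm{diag}\{\lambda_1,\ldots,\lambda_n\}$. Since $L\mathbf{1}_n = 0$, we may order the eigenvalues so that $\lambda_1 = 0$ and take the first column of $U$ to be $u_1 = \frac{1}{\sqrt{n}}\mathbf{1}_n$.

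The key step is to show that zero is a simple eigenvalue. I will use connectivity here, via the quadratic form
\begin{equation*}
x^{\top} L x = \frac{1}{2}\sum_{i,j} a_{ij}(x_i - x_j)^2 .
\end{equation*}
If $Lx = 0$, then $x^{\top}Lx = 0$, so $x_i = x_j$ across every edge. Connectivity then forces $x \in \mathrm{span}\{\mathbf{1}_n\}$. Hence $\ker L = \mathrm{span}\{\mathbf{1}_n\}$, and $\lambda_2,\ldots,\lambda_n > 0$. This is the only place where a graph hypothesis enters, and I expect it to be the main point to state carefully. It requires the weights $a_{ij}$ to be nonnegative, which the source \cite{robustDAC} implicitly assumes.

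Next, I will write down the Moore--Penrose inverse explicitly as
\begin{equation*}
L^{\dagger} = U \,\mathrm{diag}\{0, \lambda_2^{-1}, \ldots, \lambda_n^{-1}\}\, U^{\top}.
\end{equation*}
The four Penrose conditions are immediate because everything is diagonal in the same orthonormal basis. It follows that
\begin{equation*}
L L^{\dagger} = L^{\dagger} L = U\,\mathrm{diag}\{0,1,\ldots,1\}\,U^{\top} = I_n - u_1 u_1^{\top}.
\end{equation*}
Substituting $u_1 = \frac{1}{\sqrt{n}}\mathbf{1}_n$ gives $I_n - \frac{1}{n}\mathbf{1}_n\mathbf{1}_n^{\top} = M$, which is the claim. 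The two products coincide because $L$ and $L^{\dagger}$ commute, both being functions of the same symmetric matrix. In summary, once the kernel of $L$ is identified, $M$ is the orthogonal projector onto $\mathrm{range}(L) = \mathbf{1}_n^{\perp}$, and $LL^{\dagger}$ and $L^{\dagger}L$ are exactly that projector.
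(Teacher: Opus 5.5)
Your argument is correct and complete for connected undirected graphs with nonnegative weights. The paper gives no proof of this lemma; it simply imports it from \cite{robustDAC}. So your spectral argument is a self-contained substitute rather than a variant of something in the text.

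Diagonalizing the symmetric $L$ and identifying $\ker L=\mathrm{span}\{\mathbf{1}_n\}$ through the quadratic form gives you two things beyond the bare identity.

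First, it exposes hypotheses the citation leaves implicit. The weights must be nonnegative, whereas the paper only requires $a_{ij}\neq 0$ on edges. Also, $(\cdot)^{\dagger}$ must be the Moore--Penrose inverse rather than an arbitrary generalized inverse. For instance, with $L=\begin{pmatrix}1&-1\\-1&1\end{pmatrix}$, the matrix $G=\mathrm{diag}\{1,0\}$ satisfies $LGL=L$, yet $LG\neq M$.

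Second, your explicit formula for $L^{\dagger}$ gives $\|L^{\dagger}\|_2=1/\lambda_2(L)$. Hence $\|L^{\dagger}\|_{\infty}\le\sqrt{n}/\lambda_2(L)\le n/\lambda_2(L)$, which is exactly the bound used without justification in the proof of Theorem~\ref{thm:2}.

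An alternative route avoids diagonalization. $LL^{\dagger}$ and $L^{\dagger}L$ are the orthogonal projectors onto $\mathrm{range}(L)$ and $\mathrm{range}(L^{\top})$. Both ranges equal $\mathbf{1}_n^{\perp}$ as soon as $L\mathbf{1}_n=0$, $\mathbf{1}_n^{\top}L=0$ and $\mathrm{rank}\,L=n-1$. That version extends verbatim to strongly connected weight-balanced digraphs, where $L$ is not symmetric and your eigendecomposition is unavailable. For the undirected setting of this paper, your proof is fully adequate.
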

	\begin{lemma}(see in \cite{1333204})\label{lemma:LtimesX}
		For a connected graph $G$ that is undirected with Laplacian $L$, the following well-known property holds:
		$$
		\min _{\substack{x \neq 0 \\ 1^{\mathrm{T}} \mathrm{x}=0}} \frac{x^T L x}{\|x\|^2}=\lambda_2(L) .
		$$
	\end{lemma}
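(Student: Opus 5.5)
This is the classical Rayleigh--Ritz characterization of $\lambda_2(L)$ (the Fiedler value), and the plan is to prove it by spectral decomposition of the symmetric matrix $L$.

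\textbf{Spectral structure.} Since the graph is undirected, $L=D-E$ is real symmetric. By the spectral theorem it admits an orthonormal eigenbasis $v_1,\dots,v_n$ with real eigenvalues $\lambda_1\le\lambda_2\le\cdots\le\lambda_n$. Next I will use the identity
\[
x^TLx=\tfrac12\sum_{i,j}a_{ij}(x_i-x_j)^2 .
\]
It gives two facts. First, $L\succeq 0$. Second, $L\mathbf{1}=0$, because each row of $L$ sums to $d_i-\sum_j a_{ij}=0$. Hence $\lambda_1=0$, and we may take $v_1=\mathbf{1}/\sqrt{n}$.

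\textbf{Reduction to the complement of $\mathbf{1}$.} Any $x\neq 0$ with $\mathbf{1}^Tx=0$ is orthogonal to $v_1$. So it expands as $x=\sum_{k\ge2}c_kv_k$ with $\sum_k c_k^2=\|x\|^2>0$. Then
\[
\frac{x^TLx}{\|x\|^2}=\frac{\sum_{k\ge2}\lambda_kc_k^2}{\sum_{k\ge2}c_k^2}\ge\lambda_2 .
\]
Choosing $x=v_2$, which satisfies $\mathbf{1}^Tv_2=0$ by orthogonality, attains the value $\lambda_2$. Therefore the minimum exists and equals $\lambda_2(L)$.

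\textbf{Role of connectivity.} Connectivity is what makes the statement meaningful, namely $\lambda_2>0$, and ensures that the minimizing subspace is exactly $\mathbf{1}^\perp$ rather than requiring a choice of $v_1$. If $x^TLx=0$, the quadratic-form identity forces $x_i=x_j$ across every edge with $a_{ij}>0$. Along paths in a connected graph this means $x$ is constant, so $\ker L=\mathrm{span}\{\mathbf{1}\}$ and $\lambda_1=0$ is simple.

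\textbf{Main obstacle.} The one delicate point concerns the weights. The identity-based nonnegativity requires $a_{ij}\ge 0$, while the paper only states $a_{ij}\neq0$, so I will assume nonnegative weights, as is standard for Laplacians. Once that is in place, the rest is the routine Courant--Fischer argument.
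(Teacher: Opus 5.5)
Your proof is correct and is the standard Rayleigh--Ritz (Courant--Fischer) spectral argument; the paper gives no proof of its own and simply quotes the property as well known from the cited reference, which rests on this same spectral characterization. Your caveat on the weights is in fact necessary: for two nodes with $a_{12}=a_{21}=-1$ the Laplacian has eigenvalues $-2$ (eigenvector $(1,-1)^T$) and $0$, so the quotient on $\mathbf{1}^\perp$ equals $\lambda_1=-2\neq\lambda_2=0$, and the lemma as the paper states it (requiring only $a_{ij}\neq 0$) fails unless $a_{ij}\ge 0$.
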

	\begin{lemma}(see in \cite{graphtheory})\label{lemma:lambda2}
	 For a connected undirect graph $G$ of order $N$, its second Laplacian eigenvalue $\lambda_2(L)$ imposes upper bounds on the diameter and the mean distance of $G$, 
		\begin{equation}
			\operatorname{diam}(G) \geq \frac{4}{N \lambda_2(L)},
		\end{equation}
		where $\operatorname{diam}(\cdot)$ means the diameter operator. The diameter of a graph is the maximum length of the shortest path between any two vertices in the graph. For a connected undirect graph $G$, $\operatorname{diam}(G)<N$. Therefore, one has $ \lambda_{2}(L)\geq \frac{4}{N^{2}}$
	\end{lemma}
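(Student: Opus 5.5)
The statement to prove is Lemma~\ref{lemma:lambda2}: for a connected undirected graph of order $N$, $\operatorname{diam}(G)\ge 4/(N\lambda_2(L))$, and hence $\lambda_2(L)\ge 4/N^2$. I would first establish the diameter bound and then deduce the corollary from $\operatorname{diam}(G)<N$.

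For the diameter bound I would use the variational characterization in Lemma~\ref{lemma:LtimesX}, which gives
\[
\lambda_2(L)\ge \frac{4}{N\,\operatorname{diam}(G)}
\]
(the reading that yields the stated corollary). Take a unit eigenvector $x$ of $\lambda_2$ with $\mathbf{1}^T x=0$. Then
\[
\lambda_2=x^TLx=\sum_{(u,v)\in\mathcal{E}}a_{uv}(x_u-x_v)^2 .
\]
Pick vertices $p,q$ with $x_p=\max_i x_i$ and $x_q=\min_i x_i$. Since $\sum_i x_i=0$ and $\sum_i x_i^2=1$, we have
\[
1=\sum_i x_i^2\le N\max(x_p^2,x_q^2)\le N(x_p-x_q)^2 ,
\]
where the last step uses that $x_p\ge 0\ge x_q$. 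Hence $(x_p-x_q)^2\ge 1/N$.

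Next, let $P$ be a shortest path from $q$ to $p$, of length $\ell\le\operatorname{diam}(G)$. Assuming unit weights $a_{uv}=1$ (the standard setting of the cited result), Cauchy--Schwarz along $P$ gives
\[
(x_p-x_q)^2\le \ell\sum_{(u,v)\in P}(x_u-x_v)^2\le \operatorname{diam}(G)\,\lambda_2 .
\]
Combining the two estimates yields $\lambda_2\ge 1/(N\operatorname{diam}(G))$.

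This is weaker than the claimed constant $4$ by a factor of $4$. The main obstacle is therefore sharpening this step, and I would try two things first.

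\begin{itemize}
\item Use the refined inequality $(x_p-x_q)^2\ge 4/N$ for the eigenvector normalized appropriately, as in Mohar's proof. There one works with the Rayleigh quotient over all nonconstant $x$ in the form
\[
\lambda_2=\min_x \frac{N\sum_{\mathcal{E}}(x_u-x_v)^2}{\sum_{u<v}(x_u-x_v)^2}
\]
(Fiedler's formula). One then bounds each pairwise difference $(x_u-x_v)^2$ via the shortest path between $u$ and $v$, sums over pairs, and uses that each edge lies on at most about $N^2/4$ shortest-path routings after the appropriate counting.
\item Concretely, Fiedler's formula gives
\[
\lambda_2\sum_{u<v}(x_u-x_v)^2\le N\sum_{\mathcal{E}}(x_u-x_v)^2 .
\]
I would choose a cut-type test vector concentrated along a diametral path and optimize its constants to get the factor $4$.
\end{itemize}

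If this optimization does not close, I would fall back on citing \cite{graphtheory} for the constant and proving only the corollary.

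The corollary itself is immediate. A connected graph on $N$ vertices has every shortest path visiting distinct vertices, so $\operatorname{diam}(G)\le N-1<N$. Substituting into $\lambda_2\ge 4/(N\operatorname{diam}(G))$ gives
\[
\lambda_2(L)>\frac{4}{N\cdot N}=\frac{4}{N^2}.
\]
For the weighted Laplacian used in the paper, the same bound holds up to a factor $\min a_{ij}$, and this would be noted.
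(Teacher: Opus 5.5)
The corollary step ($\operatorname{diam}(G)\le N-1$, then substitute) is exactly what the paper does. The paper does not prove the diameter bound at all: it takes $\operatorname{diam}(G)\ge 4/(N\lambda_2(L))$ from Mohar as a black box, so your fallback coincides with the paper.

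\textbf{The gap.} The genuine gap is in the part you try to prove yourself. Your rigorous argument only gives $\lambda_2\ge 1/(N\operatorname{diam}(G))$, hence $\lambda_2>1/N^2$. That is a factor $4$ short of the statement, and of the constant later built into the gain condition on $\alpha_i$.

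\textbf{The fix.} The whole loss sits in your estimate $(x_p-x_q)^2\ge 1/N$. It is repaired without renormalizing or invoking Fiedler's formula. For the same unit vector with $\mathbf{1}^Tx=0$, put $M=x_p$ and $m=x_q$. Every term of $\sum_i (M-x_i)(x_i-m)$ is nonnegative. Expanding with $\sum_i x_i=0$ and $\sum_i x_i^2=1$ gives $-NMm\ge 1$, hence
\[
(x_p-x_q)^2=(M+m)^2-4Mm\ge -4Mm\ge \frac{4}{N}.
\]
Your Cauchy--Schwarz estimate along a shortest $q$--$p$ path, $(x_p-x_q)^2\le \operatorname{diam}(G)\,\lambda_2$, then yields $\lambda_2\ge 4/(N\operatorname{diam}(G))$ directly.

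\textbf{Your proposed repairs.} The test-vector idea cannot work. $\lambda_2$ is a minimum in Fiedler's formula, so evaluating at any particular vector gives an upper bound on $\lambda_2$, never the lower bound you need.

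The congestion route does work, but ``at most about $N^2/4$'' must be made precise:
\begin{itemize}
\item Suppose the chosen shortest $u$--$v$ path traverses edge $ab$ from $a$ to $b$. Then $d(u,a)<d(u,b)$ and $d(v,b)<d(v,a)$.
\item So such pairs inject into $V_a\times V_b$, where $V_a=\{w:d(w,a)<d(w,b)\}$ and $V_b=\{w:d(w,b)<d(w,a)\}$ are disjoint. This gives at most $|V_a||V_b|\le N^2/4$ pairs per edge.
\item Summing the per-pair Cauchy--Schwarz bounds gives $\sum_{u<v}(x_u-x_v)^2\le \operatorname{diam}(G)\tfrac{N^2}{4}\sum_{\mathcal{E}}(x_u-x_v)^2$. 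With Fiedler's formula this is again $\lambda_2\ge 4/(N\operatorname{diam}(G))$.
\end{itemize}

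\textbf{Weights.} Your weight caveat is correct, and the paper glosses over it. The cited bound is for unit weights. For the weighted Laplacian one only has $\lambda_2(L)\ge \min_{a_{ij}\neq 0}a_{ij}\cdot 4/N^2$, so the paper's use of $\lambda_2(L)\ge 4/\bar N^2$ tacitly assumes $a_{ij}\ge 1$.
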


	\begin{algorithm}[h!]\label{alg:I}
		\caption{Distributed Decoupled Algorithm} 
		\hspace*{0.02in} {\bf Initialization:} 
		Set $v_{i}(0)$, $w_{i}(0)$, $X_{i}(0)$, $i \in \mathcal{N}$, running time $t_{max}$, 
		and the local gain $\alpha_{i} \geq 1+\eta_{i}\|[K_{i}~L_{i}]\|_{2}\frac{\bar{N}^{4}}{2} $, $\beta_{i} >2\bar{N} \alpha_{i}$\\
		\hspace*{0.02in} 
		{\bf Implement:} 
		\begin{algorithmic}[1]
			\While{$0\leq t \leq t_{max}$} 
			\For{$i = 1 \hspace*{0.05in}  to \hspace*{0.05in}  N$}
			\State {\bf Collect} $L_{i}$, $G_{i}$ and $\delta_{i}(t)$.
			\State {\bf Set} $T_{i} = (L+G)_{i}^{T}(L+G)_{i} \in \mathbb{R}^{N\times N}$.
			\State {\bf Set} $I_{i}(t) = -[(L+G)_{i}^{T}\otimes I_{m_{1}} ]K_{i}\delta_{i}(t) \in \mathbb{R}^{m_{1}N}$.
			\State {\bf Set} $W_{i}(t) = [(L+G)_{i}^{T}\otimes I_{m_{2}} ]L_{i}\delta_{i}(t) \in \mathbb{R}^{m_{2}N}$.
			\State {\bf Set} $\psi_{i}(t) = \left[\begin{array}{c}
				vecs(T_{i})\\I_{i}(t)\\W_{i}(t)
			\end{array}\right]  \in \mathbb{R}^{\bar{m}}$, $\bar{m} = (2m_{1}+2m_{2}+N+1)N/2$.
			\State {\bf Run:}\begin{equation}\label{eq:vup}
				\dot{v}_{i}(t)=-\beta_{i} \operatorname{sgn}\left\{v_{i}(t)-\sum_{j=1}^{N} a_{i j}\left(w_{i}(t)-w_{j}(t)\right)\right\}.
			\end{equation}
			\State {\bf Run:}\begin{equation}\label{eq:wup}
				\dot{w}_{i}(t)=-\alpha_{i} \operatorname{sgn}\left\{\sum_{j=1}^{N} a_{i j}\left(X_{i}(t)-X_{j}(t)\right)\right\}.
			\end{equation}
			\State {\bf Run:}\begin{equation}\label{eq:xup}
				X_{i}(t) = v_{i}(t)+\psi_{i}(t).
			\end{equation}
			\State $$\hat{T}_{i} = ves^{-1} \left([X_{i}(t)]_{1:(N+1)N/2}\right).$$
			\State $$\hat{I}_{i}(t) = [X_{i}(t)]_{1+(N+1)N/2:(2m_{1}+N+1)N/2}.$$
			\State $$\hat{W}_{i}(t) = [X_{i}(t)]_{(2m_{1}+N+1)N/2+1:(2m_{1}+2m_{2}+N+1)N/2}.$$
			\State {\bf Output:}\begin{equation}
				\begin{aligned}
					\hat{u}_{i}(t)& = [(\hat{T}_{i}\otimes I_{m_{1}})^{-1} \hat{I}_{i}(t)]_{(i-1)m_{1}+1:im_{1}},\\
					\hat{w}_{i}(t) &= [(\hat{T}_{i}\otimes I_{m_{2}})^{-1} \hat{w}_{i}(t)]_{(i-1)m_{2}+1:im_{2}}.
				\end{aligned}
			\end{equation}
			\EndFor
			\EndWhile
		\end{algorithmic}
	\end{algorithm}
	
	Algorithm 1 is implemented distributedly, since $(L+G)_{i}$, $K_{i} = R_{i}^{-1}B^{T}P_{i}$, $L_{i} = 1/\gamma_{i}^{2}D^{T}P_{i}$ and $\delta_{i} = \sum_{j \in \mathcal{N}_{i}} a_{i j}\left(x_{i}-x_{j}\right)+g_{i}\left(x_{i}-x_{0}\right)$ are all local information. The outputs $\hat{u}_{i}(t)$ and $\hat{w}_{i}(t)$ are the estimations of $u^{*}_{i}$ in (\ref{eq:ucoupled}) and $w^{*}_{i}$ in \eqref{eq:wcoupled}, respectively. The following theorem will show that the estimate error converges to zero in finite time.

	\begin{theorem}\label{thm:2}
		Given Assumptions 1, 2, 3 and 4, the outputs of Algorithm 1, i.e., $\hat{u}_{i}(t)$ and $\hat{w}_{i}(t)$ converge to $u^{*}_{i}$ in (\ref{eq:ucoupled}) and $w^{*}_{i}$ in \eqref{eq:wcoupled} respectively for all $t\geq t^{*} = \|\tilde{v}(0)\|_{2} +\|X(\|\tilde{v}(0)\|_{2}) - (1_{N}\otimes I_{{\bar{m}}})\bar{\psi}(\|\tilde{v}(0)\|_{2})\|_{2}/\lambda_{2}(L) $, where  $\bar{\psi}(t) = \sum_{i=1}^{N}\psi_{i}(t)/N$. That is to say, for all $t\geq t^{*}$:
		\begin{equation*}
			\hat{u}_{i}(t) = [(\hat{T}_{i}\otimes I_{m_{1}})^{-1} \hat{I}_{i}(t)]_{(i-1)m_{1}+1:im_{1}} = u^{*}_{i}, \quad \forall i \in \mathcal{N}.
		\end{equation*}
		\begin{equation*}
			\hat{w}_{i}(t) = [(\hat{T}_{i}\otimes I_{m_{2}})^{-1} \hat{w}_{i}(t)]_{(i-1)m_{2}+1:im_{2}} = w^{*}_{i}, \quad \forall i \in \mathcal{N}.
		\end{equation*}
	\end{theorem}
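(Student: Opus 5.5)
The plan is to prove that the auxiliary states $X_i(t)$ reach exact agreement on the network average $\bar{\psi}(t)=\frac{1}{N}\sum_{i}\psi_i(t)$ in finite time. Once they do, the algebraic identities of Theorem 2 give the required outputs. The convergence argument follows the two-stage structure of robust dynamic average consensus in \cite{robustDAC}, and we treat $\psi_i$ as a time-varying reference signal.

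\textbf{Stage 1 (first time interval).} Stack the variables as $v=[v_1^T,\dots,v_N^T]^T$, and similarly $w$ and $X$. Write $\mathcal{L}=L\otimes I_{\bar m}$, where $L$ is the graph Laplacian; the pinning gains do not enter the consensus layer. Define $\tilde v = v-\mathcal{L}w$. From \eqref{eq:vup}--\eqref{eq:wup}, each component of $\dot{\tilde v}_i$ contains the term $-\beta_i\,\mathrm{sgn}(\tilde v_i)$ plus a perturbation $-\sum_j a_{ij}(\dot w_i-\dot w_j)$. The perturbation is bounded componentwise by $2 d_i \max_j\alpha_j\le 2\bar N\max\alpha$. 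The choice $\beta_i>2\bar N\alpha_i$ (read as a uniform bound) then gives $\frac{d}{dt}\|\tilde v\|_2\le -c$ for some $c>0$. Hence $\tilde v\equiv 0$ for all $t\ge \|\tilde v(0)\|_2$ after normalising $c\ge1$, which is the first term of $t^*$. After this time $v=\mathcal{L}w$, so $X=\mathcal{L}w+\psi$. By Lemma~\ref{lemma:M}, $(\mathbf{1}^T\otimes I)v=0$, which means the average of $X$ equals $\bar\psi$ exactly.

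\textbf{Stage 2 (second time interval).} Define the disagreement $e=X-(\mathbf{1}_N\otimes I_{\bar m})\bar\psi$. Then $\dot e = -\mathcal{L}\,\mathrm{diag}(\alpha_i)\,\mathrm{sgn}(\mathcal{L}X)+(M\otimes I)\dot\psi$, with $M$ as in Lemma~\ref{lemma:M}. Use the candidate $V=\frac12 e^T\mathcal{L}e$, or equivalently work with $\|e\|_2$, and note $\mathcal{L}X=\mathcal{L}e$. The sign term contributes at most $-\min\alpha_i\|\mathcal{L}e\|_1$. The drift term is bounded using Assumption~\ref{assum:bound}: $\dot\psi_i$ involves only $\dot\delta_i$, because $T_i$ is constant, so $\|\dot\psi_i\|\le\|(L+G)_i\|\,\|[K_i~L_i]\|\,\eta_i$. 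To convert $\|\mathcal{L}e\|$ into $\|e\|$, apply Lemma~\ref{lemma:LtimesX} (valid since $\mathbf{1}^Te=0$) together with Lemma~\ref{lemma:lambda2}, which gives $\lambda_2(L)\ge 4/N^2$. The resulting $\bar N^4$ factors are absorbed by the condition $\alpha_i\ge 1+\eta_i\|[K_i~L_i]\|\bar N^4/2$, giving $\frac{d}{dt}\|e\|_2\le-\lambda_2(L)$. Therefore $e\equiv 0$ after a further time $\|e(t_1)\|_2/\lambda_2(L)$ with $t_1=\|\tilde v(0)\|_2$, which is exactly $t^*$.

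\textbf{Stage 3 (recovering the outputs).} For $t\ge t^*$, each $X_i=\bar\psi$. Consequently $\hat T_i=\frac1N\sum_j (L+G)_j^T(L+G)_j=\frac1N(L+G)^2$, and $\hat I_i=-\frac1N[(L+G)\otimes I]\,\mathrm{col}(K_j\delta_j)$. It follows that $(\hat T_i\otimes I)^{-1}\hat I_i=-[(L+G)^{-1}\otimes I]\,\mathrm{col}(K_j\delta_j)$, which by \eqref{eq:ucoupledoverall} equals $u^*$. Its $i$-th block is $u_i^*$. The same computation with \eqref{eq:wcoupledoverall} gives $w_i^*$; here the typo $\hat w_i$ in the output line should be read as $\hat W_i$. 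Invertibility of $\hat T_i$ follows from the nonsingularity of $L+G$ (Remark 1).

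\textbf{Main obstacle.} I expect Stage 2 to be the hard part: a nonsmooth Lyapunov argument in which the drift bound must be dominated by the sign term with the stated gain constants. The mismatch between the $\ell_1$ norm produced by $\mathrm{sgn}$ and the $\ell_2$ Laplacian bounds, together with the heterogeneous $\alpha_i$, is where the constants $\bar N^4/2$ and $2\bar N$ must be tracked carefully. I would handle the nonsmoothness with Filippov solutions and work componentwise, using $\|y\|_1\ge\|y\|_2$.
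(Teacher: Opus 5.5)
Your proposal is correct and follows essentially the same route as the paper's proof. The structure is: finite-time vanishing of $\tilde v=v-(L\otimes I)w$ under $\beta_i>2\bar N\alpha_i$; then a Lyapunov argument on $\frac{1}{2}\|\tilde X\|_2^2$ (your $e$), where the $\dot\psi$ drift is dominated by the sign term using $\lambda_2(L)\ge 4/\bar N^2$; and finally the identity $u^*=\bigl((\sum_i T_i)\otimes I\bigr)^{-1}\sum_i I_i$, in which the $1/N$ factors in $\hat T_i,\hat I_i$ cancel. One small correction: $\frac{1}{2}e^T(L\otimes I)e$ is not equivalent to $\frac{1}{2}\|e\|_2^2$, since its derivative pairs $(L\otimes I)^2$ with the sign term and is not sign-definite, so drop it and keep the latter, which is what your bounds actually use.
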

	\begin{proof}
		The proof of the theorem is separated in two parts, first we proof that $X_{i}(t)$ converges to $\bar{\psi}(t) = \sum_{i=1}^{N}\psi_{i}(t)/N$ in finite time, then we show that 	$u^{*} = -[(L+G)\otimes I_{m1}]^{-1} \bar{R}^{-1}\hat{B}^{T}P\delta$ and $w^{*} = [(L+G)\otimes I_{m2}]^{-1} \bar{\Gamma}^{-1}\hat{B}^{T}P$ can be express as $(\sum_{i=1}^{N}(T_{i}\otimes I_{m_{1}})^{-1}(\sum_{i=1}^{N}I_{i})$ and $(\sum_{i=1}^{N}(T_{i}\otimes I_{m_{2}})^{-1}(\sum_{i=1}^{N}W_{i})$. Therefore, if $X_{i}\to\bar{\phi}(t) $, $\hat{u}_{i}$ and $\hat{w}_{i}$ converge to $u^{*}_{i}$ and $w^{*}_{i}$, respectively.
		
		{\bf Part 1:} The global form of \eqref{eq:wup}, (\ref{eq:vup}) and (\ref{eq:xup}) can be written as 
		\begin{subequations}\label{eq:overallxv}
			\begin{equation}
				\dot{w}(t) = -\alpha sgn\left\{ (L\otimes I)X(t)\right\},
			\end{equation}
			\begin{equation}
				\dot{v}(t) = -\beta sign\{v(t)-(L\otimes I)w(t)\}
			\end{equation}
			\begin{equation}
				X(t) = v(t) +\psi(t),
			\end{equation}
		\end{subequations}
		where $X(t) = [X^{T}_{1}(t),X^{T}_{2}(t),\ldots,X^{T}_{N}(t)]^{T}$, $v(t) = [v_{1}^{T}(t), v_{2}^{T}(t), \ldots,v_{N}^{T}(t)]^{T}$, $w(t) = [w_{1}^{T}(t), w_{2}^{T}(t), \ldots,w_{N}^{T}(t)]^{T}$, $\psi(t)= [\psi^{T}_{1}(t),\psi^{T}_{2}(t),\ldots,\psi^{T}_{N}(t)]^{T}$, $\alpha =diag\{\alpha_{1}, \alpha_{2}, \ldots, \alpha_{N}\}\otimes I$, $\beta = diag\{\beta_{1},\beta_{2}, \ldots,\beta_{N}\}\otimes I$.
		
		First, it is proved that $\tilde{v} = v(t)-(L\otimes I)w$ converge to zeros in finite time. consider the following Lyapunov function candidate
		\begin{equation}
			V_{1}(\tilde{v}(t)) = \frac{1}{2}\tilde{v}^{T}(t)\tilde{v}(t).
		\end{equation}
		Taking derivative of $V_{1}$, it follows
		\begin{equation}
			\begin{aligned}
				\dot{V}_{1} &= \tilde{v}^{T}\alpha(L\otimes I)sgn\{(L\otimes I)X\} - \|\beta\tilde{v}\|_{1}\\
				&\leq \|\alpha\tilde{v}\|_{1}\|L\|_{1}\|sgn\{(L\otimes I)X\|_{\infty} - \|\beta\tilde{v}\|_{1}.
			\end{aligned}
		\end{equation} 
		Since $\|L\|_{1}<2N-2$ and $\|sgn\{(L\otimes I)X\|_{\infty} \leq 1$, and $\beta_{i}>2\bar{N}\alpha_{i}$, one has
		\begin{equation}
			\dot{V}_{1}\leq -\|\tilde{v}\|_{1}\leq - \|\tilde{v}\|_{2} = -\sqrt{2}\sqrt{V_{1}}.
		\end{equation}
		One can obtianed that 
		\begin{equation}
			\sqrt{V_1(\tilde{v}(t))}\leq \sqrt{V_{1}(\tilde{v}(0))} - \frac{1}{\sqrt{2}}t.
		\end{equation}
		Note that $V_1(\tilde{v}(t))$ is positive-definite, therefore, we have $\tilde{v}(t) = 0$ for all $t\geq \|\tilde{v}(0)\|_{2}$. 
		
		Then, define the tracking error of $X(t)$ as  $\tilde{X}(t) = X(t) - (1_{N}\otimes I)\bar{\psi}(t)$, noting that $\bar{\psi}(t) =(1^{T}_{N}\otimes I)\psi(t)/N $, for all $t>\|\tilde{v}(0)\|_{2}$, one has 
		\begin{equation}\label{eq:errorX}
			\tilde{X}(t) = (L\otimes I)w(t) +(M \otimes I)\psi(t),
		\end{equation}
		where $M =I_{N}-\frac{1}{N}1_{N}1_{N}^{T}$. Take the time derivative of (\ref{eq:errorX}), yields
		\begin{equation}
			\label{eq:derrorX}
			\dot{\tilde{X}}(t) = -\alpha (L\otimes I)sgn\left\{ (L\otimes I)X(t)\right\}+(M\otimes I)\dot{\psi}(t).
		\end{equation}
		Consider the following Lyapunov candidate 
		\begin{equation}
			V(\tilde{X}(t)) = \frac{1}{2}\tilde{X}^{T}(t)\tilde{X}(t).
		\end{equation}
		Taking the time derivative of $	V(\tilde{X}(t))$ and subsitituting (\ref{eq:derrorX}) yields 
		\begin{equation}\begin{aligned}
				\dot{V}(\tilde{X}(t)) = &-\tilde{X}^{T}(t)(L\otimes I)\alpha sgn\left\{ (L\otimes I)X(t)\right\}\\&+\tilde{X}^{T}(t)(M\otimes I)\dot{\psi}(t).
			\end{aligned}
		\end{equation}
		Note that $$(L\otimes I)X(t) = (L\otimes I)\tilde{X}(t),$$ and from Lemma \ref{lemma:M}, $M = L(L)^{\dagger}$, and  $$\tilde{X}^{T}(t)(L\otimes I)\dot{\psi} \leq \|\mathcal{H}(L\otimes I)\tilde{X}(t)\|_{1},$$ where $$\mathcal{H} = 2\bar{N}diag\{\eta_{1}\|[K_{1}~L_{1}]\|_{2},\ldots, \eta_{N}\|[K_{N}~L_{N}]\|_{2}\}.$$ Considering $\|L^{\dagger}\|_{\infty} \leq \frac{N}{\lambda_{2}(L)}$, and $\lambda_{2}(L)\geq \frac{4}{N^{2}}\geq \frac{4}{\bar{N}^{2}}$, one has 
		\begin{equation}
			\dot{V}(\tilde{X}(t))\leq -\|\alpha(L\otimes I)\tilde{X}(t)\|_{1}+ \frac{\bar{N}^{3}}{4}\|\mathcal{H}(L\otimes I)\tilde{X}(t)\|_{1}.
		\end{equation}
		Considering  $\alpha_{i} \geq 1+\eta_{i}\|[K_{i}~L_{i}]\|_{2}\frac{\bar{N}^{4}}{2}$ yields
		\begin{equation}
			\dot{V}(\tilde{X}(t))\leq - \|(L\otimes I_{\bar{m}})\tilde{X}(t)\|_{1}\leq -\|(L\otimes I_{\bar{m}})\tilde{X}(t)\|_{2},
		\end{equation}
		Since $1_{\bar{m}N}^{T}\tilde{X}(t) = 0$, consider Lemma \ref{lemma:LtimesX}, one gets
		\begin{equation}
			\dot{V}(\tilde{X}(t))\leq -\lambda_{2}(L)\sqrt{2}V^{\frac{1}{2}}(\tilde{X}(t)).
		\end{equation}
		From Comparison Lemma, one has for all $t>\|\tilde{v}(0)\|_{2}$, the following holds
		\begin{equation}
			\sqrt{V(\tilde{X}(t))} \leq \sqrt{V(\tilde{X}(\|\tilde{v}(0)\|_{2}))} - \frac{\sqrt{2}}{2}\lambda_{2}(L)(t-\|\tilde{v}(0)\|_{2}).
		\end{equation}
		Therefore, it can be obtained that $\tilde{X}(t) = 0$, for $\forall$ $t\geq t^{*}$, where $t^{*} = \|\tilde{v}(0)\|_{2}+\frac{\tilde{X}(\|\tilde{v}(0)\|_{2})}{\lambda_{2}(L)}$.
		
		{\bf part 2:} Note that $\sum_{i=1}^{N}T_{i} =(L+G)^{T}(L+G)\otimes I_{m} $  , and $\sum_{i=1}^{N}I_{i} = -[(L+G)^{T}\otimes I]K\delta$, $\sum_{i=1}^{N} W_{i} =[(L+G)^{T}\otimes I]L\delta $ since $(L+G)$ is nonsingular, one has 
		\begin{equation}
			u^{*} = (\sum_{i=1}^{N}T_{i})^{-1}(\sum_{i=1}^{N}I_{i}),
		\end{equation}
		\begin{equation}
			u^{*} = (\sum_{i=1}^{N}T_{i})^{-1}(\sum_{i=1}^{N}W_{i}).
		\end{equation}
		Note that $\hat{T}_{i}$, $\hat{I}_{i}$ and $\hat{W}_{i}$ converge to $\frac{1}{N}\sum_{i=1}^{N}T_{i}$, $\frac{1}{N}\sum_{i=1}^{N}I_{i}$ and $\frac{1}{N}\sum_{i=1}^{N}W_{i}$ respectively for all $t>t^{*}$.
		Thus,  $\hat{u}_{i} = u^{*}_{i}$ and $\hat{w}_{i}=w^{*}_{i}$ for all $t>t^{*}$.
	\end{proof}
	\begin{remark}
	Algorithm 1 is proven to enable the distributed computation of the optimal control policy and worst-case attack for the global coalitional min-max problem. While the problem is formulated within a game theory framework, in control applications, our main focus is on the control policy. Hence, the proposed method allows for the distributed determination of each agent's robust control policy, ensuring the satisfaction of the global $L_{2}$ disturbance attenuation condition.
	\end{remark}
	
	\section{Multi-vehicle Cooperation Simulation}
	\subsection{Modeling Formulation}
			\begin{figure}[tpb]
		\centering
		\includegraphics[width=0.6\linewidth]{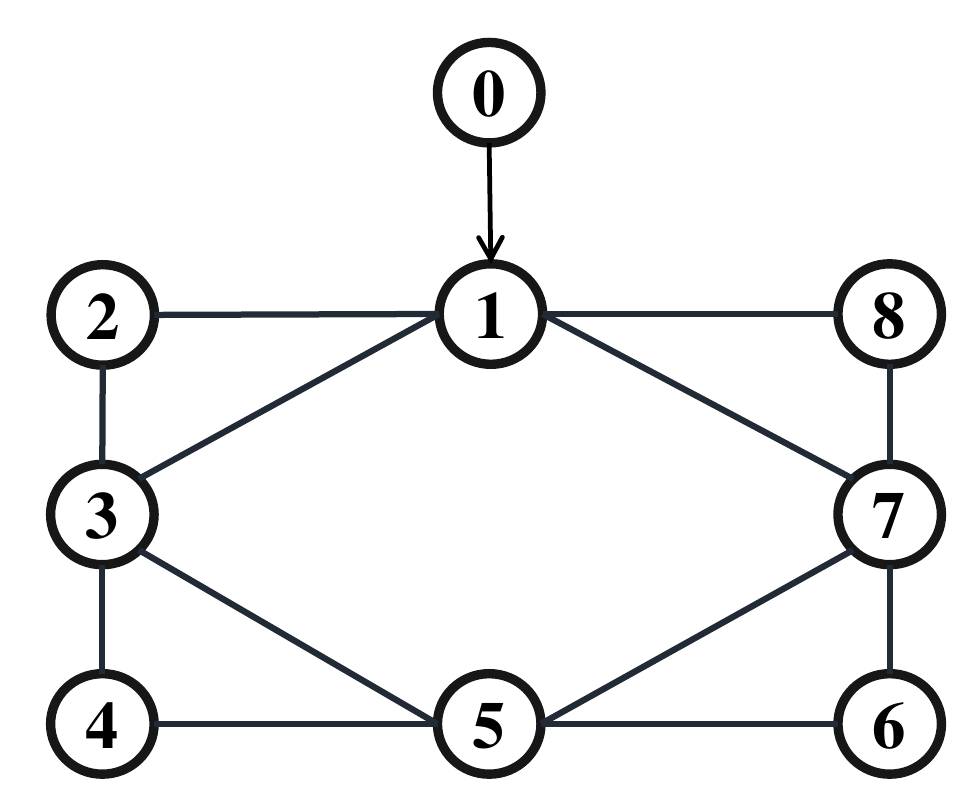}
		\caption{The communication topology.}
		\label{fig:top}
	\end{figure}
Consider the multi-vehicle systems consisting of $N$ wheeled vehicles. The schematic of the $i$th vehicle is shown as Fig \ref{fig:vehicle}. The dynamics of the $i$th vehicle can be described as
	\begin{figure}
		\centering
		\includegraphics[width=0.5\linewidth]{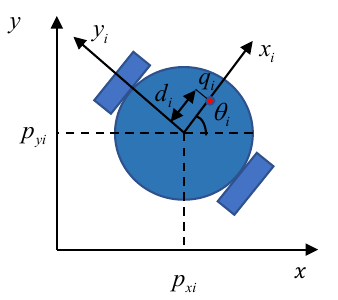}
		\caption{Schematic of the $i$th vehicle.}
		\label{fig:vehicle}
	\end{figure}
	
	\begin{equation}
		\begin{aligned}
			& \dot{p}_{x i}=v_i \cos \theta_i, \dot{p}_{y i}=v_i \sin \theta_i, \dot{\theta}_i=\omega_i, \\
			& \dot{v}_i=F_i / m_i, \dot{\omega}_i=M_i / J_i,
		\end{aligned}
	\end{equation}
where $p_{x i}$, $P_{y,i}$ and $\theta_i$ denote the Cartesian position and orientation of the $i$th vehicle, $v_{i}$ and $\omega_{i}$ are the linear and angular velocities, and $m_{i}$, $J_{i}$ are the mass and moment of inertia of $i$th vehicle. $F_{i}$ and $\tau_{i}$ are the input force and torque of vehicle $i$. Similar to \cite{wangq3}, the dynamics of the $i$th vehicle can be reformulated as follows by feedback linearization at a fixed reference point $q_{i} = [q_{x,i},q_{y,i}]^{T}$ off the center of the vehicle.
	\begin{equation}\label{eq:vehicle_dyn}
		\frac{d}{dt}\left[\begin{array}{c}
			q_{i}\\\dot{q}_{i}
		\end{array}\right] = \left[\begin{array}{cc}
			0_{2} & I_{2} \\0_{2} & 0_{2}
		\end{array}\right]\left[\begin{array}{c}
			q_{i}\\ \dot{q}_{i}
		\end{array}\right]+\left[\begin{array}{c}
			0_{2}\\ I_{2}
		\end{array}\right]u_{i} + \left[\begin{array}{c}
			0_{2}\\ I_{2}
		\end{array}\right]w_{i},
	\end{equation}	
	where $u_{i} = [u_{x,i}~ u_{y,i}]^{T}$ is the control input, $w_{i} = [w_{x,i}~ w_{y,i}]^{T}$ is the energy-bounded attack input. 
	
The most commonly used multi-vehicle cooperation task is formation, as it can be applied to many practical scenarios, such as multi-vehicle handling and transportation. The leader can be considered as virtual; its dynamics are described as
	\begin{equation}\label{eq:leader_virtual}
		\frac{d}{dt}\left[\begin{array}{c}
			q_{0}\\\dot{q}_{0}
		\end{array}\right] = \left[\begin{array}{cc}
			0_{2} & I_{2} \\0_{2} & 0_{2}
		\end{array}\right]\left[\begin{array}{c}
			q_{0}\\ \dot{q}_{0}
		\end{array}\right]
	\end{equation} and its state data is generated by the computer and transmitted to some of the vehicles. Notably, the velocity of the leader $\dot{q}_{0}$ can be set by the computer, allowing the virtual leader to guide the multi-vehicle systems along desired trajectories. Denote $x_{i} = [q^{T}_{i},\dot{q}^{T}_{i}]^{T}$, for formation control, the neighbor error of vehicle $i$ is
	\begin{equation}
		\delta_{i}:=\sum_{j \in \mathcal{N}_{i}} a_{i j}\left(x_{i}-x_{j}\right)+g_{i}\left(x_{i}-x_{0}-x_{c,i}\right),
	\end{equation}
	where $x_{c,i} = [q^{T}_{c,i}~ 0 ~0]^{T}$ is a relative position between the $i$th vehicle and the leader. 
	
	\subsection{Numerical Simulations}
 \begin{figure}[tpb]
	\centering
	\includegraphics[width=0.95\linewidth]{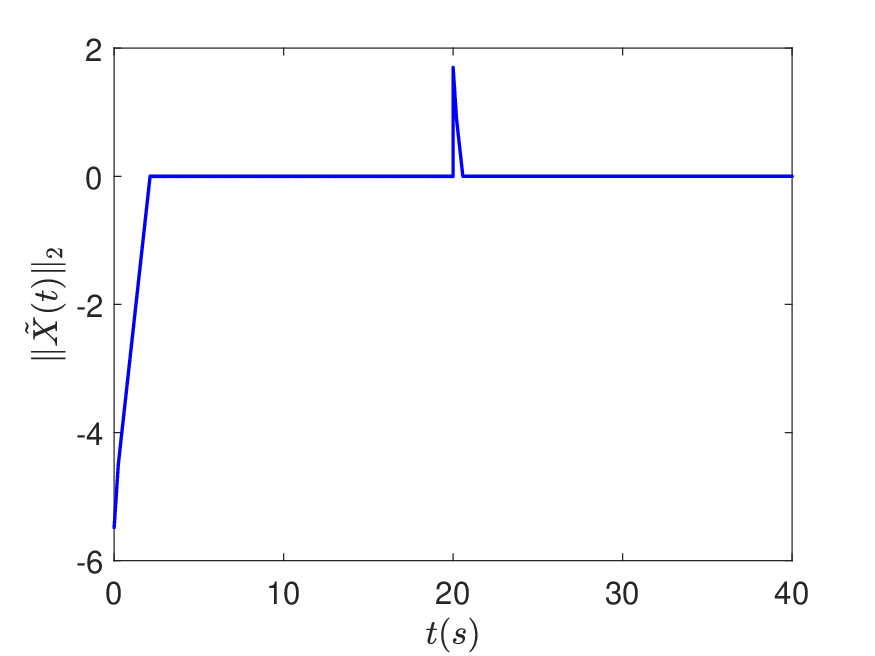}
	\caption{The convergence of Algorithm 1.}
	\label{fig:tildex}
\end{figure}
	Consider 8 vehicles with dynamics described as \eqref{eq:vehicle_dyn} and one leader described as \eqref{eq:leader_virtual}. The $i$th vehicle's position is initiated as $[5+3*cos(\frac{(i-1)\pi}{4})~ 5+3*sin(\frac{(i-1)\pi}{4})]^{T}$. The velocity of each vehicle is $[0~ 0 ]^{T}$. The leader's initial position and velocity are $[0~0~-0.5~-0.1]^{T}$. The desired formation pattern is defined as
	\begin{equation}
		q_{c,i} =\left\{\begin{array}{c} \left[0.9*cos(\frac{(i-1)\pi }{4})~ 1.2*sin(\frac{(i-1)\pi }{4})\right]^{T},~ t<20,  \\ \left[2*cos(\frac{(i-1)\pi }{4})~ 3*sin(\frac{(i-1)\pi }{4})\right]^{T},~ t\geq 20. 
		\end{array}	\right.
	\end{equation}
	The communication topology of vehicles is illustrated as \figurename \ref{fig:top}. The local consensus gain in Algorithm 1 is $\alpha_{i} = 2100$, $\beta_{i} = 34000$. The local weighting matrices are chosen as $Q_{i} = 10I_{4}$, $R_{i} = I_{2}$ and the local $L_{2}$ gain is $\gamma_{i} = 2$.

 \figurename \ref{fig:tildex} shows the consensys error of Algorithm 1, the result demonstrates that Algorithm can converge very fast meet the decoupling requirements.

	\figurename \ref{fig:formation} illustrates the trajectories of 8 vehicles, it can be obatained that the 8 vehicles can effectively form a formatiion and transform the formation by reset their relative positions with the leader.
	\begin{figure}[tpb]
		\centering
		\includegraphics[width=0.95\linewidth]{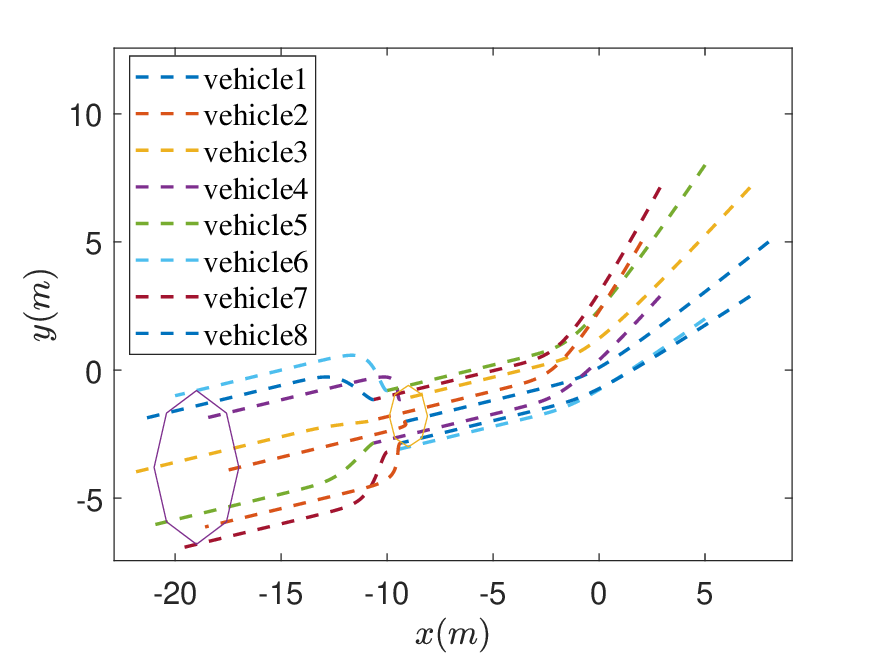}
		\caption{The trajectories of 8 vehicles.}
		\label{fig:formation}
	\end{figure}

	\begin{figure}[thpb]
		\centering
		\includegraphics[width=0.95\linewidth]{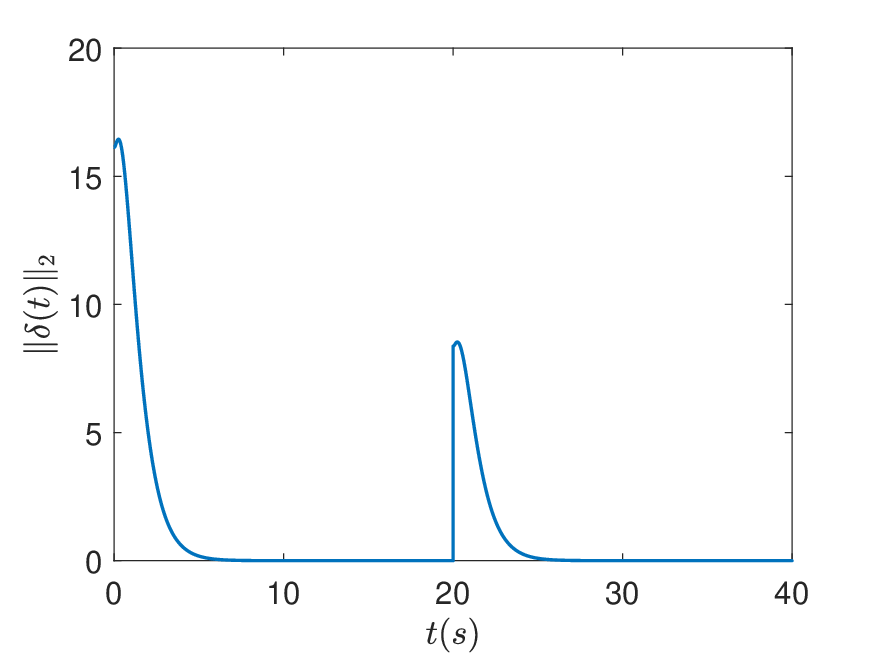}
		\caption{The evelotion of the norm of the global tracking error.}
		\label{fig:errorconverge}
	\end{figure}
\begin{figure}[htpb]
	\centering
	\includegraphics[width=0.9\linewidth]{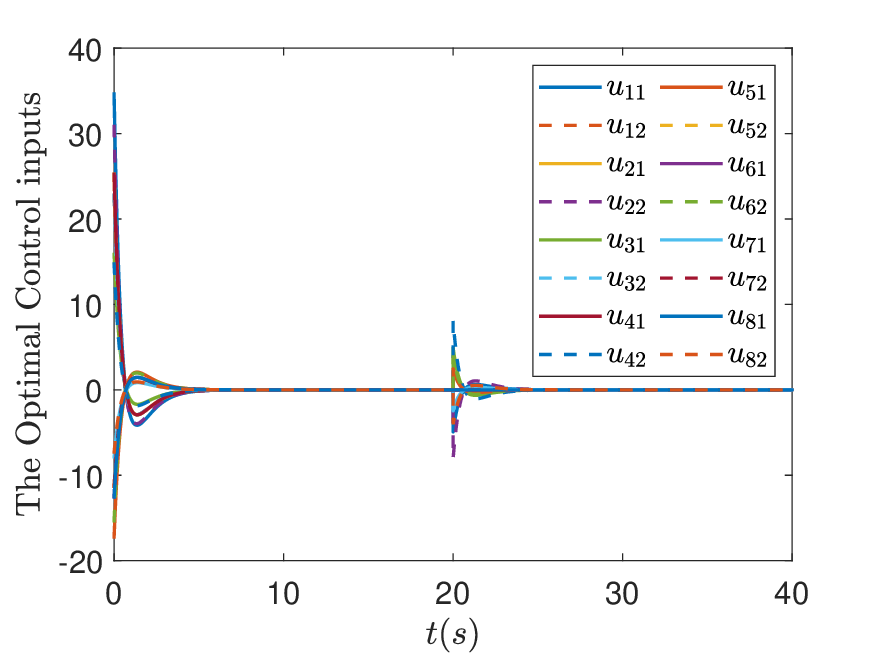}
	\caption{The min-max control inputs of 8 vehicles.}
	\label{fig:u}
\end{figure}
\begin{figure}
	\centering
	\includegraphics[width=0.9\linewidth]{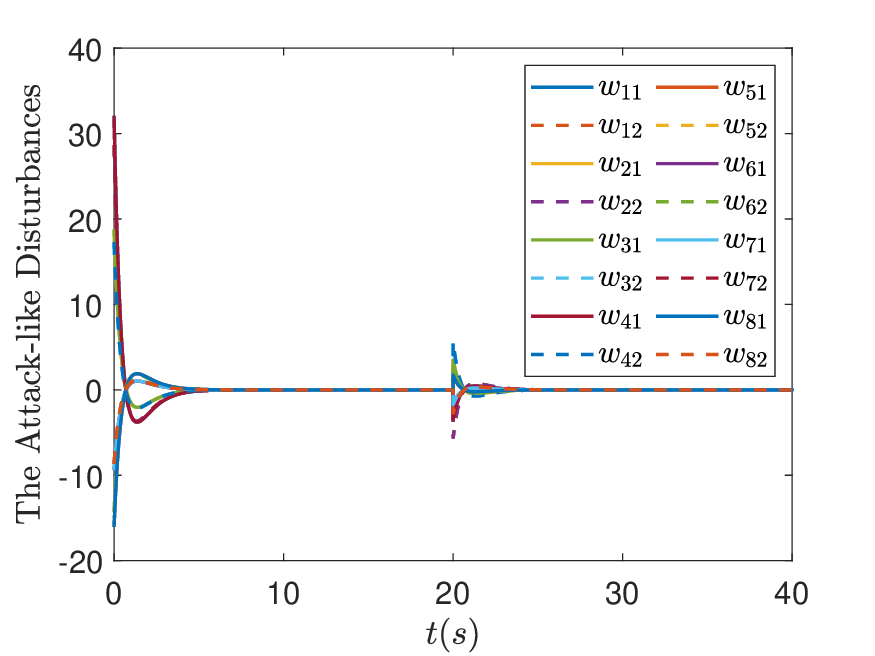}
	\caption{The worst-cast energy-bounded attack inputs of 8 vehicles.}
	\label{fig:w}
\end{figure}

The convergence result of the global tracking error is shown in \figurename \ref{fig:errorconverge}. The error norm undergoes a mutation because the formation changes at 20 seconds. As shown in \figurename \ref{fig:formation} and \ref{fig:errorconverge}, the vehicles can achieve the desired formation in a short time. When the formation pattern changes, the vehicles can quickly adapt to the change and form the new formation.

The control inputs of 8 vehicles, obtained by solving the min-max problem \eqref{eq:minmax-prob} are illustrated in \figurename \ref{fig:u}, and the worst-case attacks of the 8 vehicles, solved by the min-max problem \eqref{eq:minmax-prob} are illustrated in \figurename \ref{fig:w}.

\section{conclusion}
This paper focuses on the robust leader-following consensus problem in multi-agent systems with attacks and its application to multi-vehicle cooperative formation. The main contribution of the paper lies in formulating the robust leader-following consensus control problem as a global coalitional min-max game problem and proving the effectiveness of the obtained controller in solving the consensus problem. In particular, the paper introduces a decentralized computation strategy and implements a distributed decoupling algorithm to grapple with the intricate coupling challenges encountered during controller computation and implementation. This approach boasts two pivotal advantages. Firstly, it presents user-friendly global cost weights, empowering users to adjust these weights based on their specific performance criteria. Secondly, it ensures that each agent is only required to solve a local, low-dimensional GARE to derive the min-max strategy. This efficient approach effectively avoids the curse of dimensionality, which often affects systems with a growing number of agents. The validity and efficacy of the proposed approach are substantiated through a series of simulation examples that illustrate the correctness and practical utility of the introduced methodology.

In terms of future research directions, extending the method to nonlinear systems holds promise. This would enable the application of the proposed approach to a wider range of real-world systems. Furthermore, integrating data-driven techniques with the proposed method could enhance its performance and applicability in scenarios with limited prior knowledge or uncertain system dynamics.

\end{document}